%% file: main.tex
\documentclass[11pt,a4paper]{article}

\usepackage[T1]{fontenc}
\usepackage[utf8]{inputenc}
\usepackage[margin=1in]{geometry}
\usepackage{amsmath,mathtools}
\usepackage{amsfonts}
\usepackage{amsthm}
\usepackage{graphicx}
\usepackage{booktabs}
\usepackage{xcolor}
\usepackage{xspace}
\usepackage{tikz}
\usepackage{subcaption}
\usepackage{cite}
\usepackage[hidelinks]{hyperref}
\usepackage{orcidlink}
\usepackage[capitalize,nameinlink,noabbrev]{cleveref}
\usetikzlibrary{patterns}

\newtheorem{theorem}{Theorem}
\newtheorem{lemma}[theorem]{Lemma}
\newtheorem{corollary}[theorem]{Corollary}
\newtheorem{observation}[theorem]{Observation}

\newcommand{\ndn}{\ensuremath{\langle n,d,n \rangle}}
\newcommand{\ddd}{\ensuremath{\langle d,d,d \rangle}}
\newcommand{\dnd}{\ensuremath{\langle d,n,d \rangle}}
\newcommand{\kkk}{\ensuremath{\langle k,k,k \rangle}}
\newcommand{\nnn}{\ensuremath{\langle n,n,n \rangle}}
\newcommand{\anymatmul}[3]{\ensuremath{\langle #1,#2,#3 \rangle}}

\providecommand{\Description}[1]{}

\newcommand{\ndnup}{\ensuremath{ O( \min \{d\sqrt{n}, {n+ d^{2/3}n^{2/3}}\}) }}
\newcommand{\ndnupfields}{\ensuremath{ O( \min \{d\sqrt{n}, {n+ d^{2/3}n^{2/3}}, {n^{2-2/\omega}}\}) }}
\newcommand{\ndnlowcond}{\ensuremath{ \Omega( \frac{d^{14/9}}{n^{2/9}}) }}
\newcommand{\ndnlowuncond}{\ensuremath{ \Omega( \min \{d\sqrt{n}, n\}) }}

\newcommand{\dndup}{\ensuremath{ O( d^{4/3} + \log n) }}
\newcommand{\dndlowcond}{\ensuremath{ \Omega( \frac{d^{2}}{n^{2/3}}) }}
\newcommand{\dndlowuncond}{\ensuremath{ \Omega( \frac{d^{2}}{n}) }}
\newcommand{\dndupfields}{\ensuremath{ O( d^{2-2/\omega} + \log n) }}

\newcommand{\sqn}{\sqrt{n}}
\newcommand{\na}{n^{\alpha}}

\definecolor{myblue}{HTML}{0088cc}
\definecolor{myorange}{HTML}{f26924}
\definecolor{mygrey}{HTML}{4D4D4D}

\title{Rectangular Matrix Multiplication in the Low-Bandwidth Model}

\author{%
	Chetan Gupta\,\orcidlink{0000-0002-0727-160X}\\
	IIT Roorkee, Roorkee, India\\
	\texttt{chetan.gupta@cs.iitr.ac.in}
	\and
	Jukka Suomela\,\orcidlink{0000-0001-6117-8089}\\
	Aalto University, Helsinki, Finland\\
	\texttt{jukka.suomela@aalto.fi}
	\and
	Hossein Vahidi\,\orcidlink{0000-0002-0040-1213}\\
	Aalto University, Helsinki, Finland\\
	\texttt{hossein.vahidi@aalto.fi}%
}
\date{}

\begin{document}
	
\maketitle
	
\begin{abstract}
	We study rectangular matrix multiplication in the low-bandwidth model of distributed computing. There are $n$ computers; initially the input matrices are distributed evenly between computers, and in each communication round every computer can send and receive an $O(\log n)$-bit message. Eventually each computer must output its designated part of the product matrix.
		
	While prior work has focused primarily on square $n \times n$ multiplication under various sparsity assumptions, we study \emph{rectangular} instances with no sparsity assumption.
	We denote by $\langle a,b,c\rangle$ the task of multiplying an $a\times b$ matrix by a $b\times c$ matrix in this model. We concentrate on two natural aspect ratios, $\langle n,d,n\rangle$ and $\langle d,n,d\rangle$, for $d \le n$, and we study how the round complexity depends on $n$ and $d$.
		
	When $d \to n$, both $\langle n,d,n\rangle$ and $\langle d,n,d\rangle$ approach $\langle n,n,n\rangle$, which is the usual task of multiplying square matrices. If we consider multiplication over semirings, the current best upper bound in that case is $O(n^{4/3})$ rounds, and there is a trivial unconditional lower bound of $\Omega(n)$.
		
	We show that for $\langle d,n,d\rangle$, we can achieve the complexity of $\tilde O(d^{4/3})$, which seems like a natural generalization of the upper bound $\tilde O(n^{4/3})$ when $d=n$. However, the case of $\langle n,d,n\rangle$ is fundamentally different, and also exhibits a phase transition. We show that for $d \le \sqrt{n}$, the complexity of $\langle n,d,n\rangle$ is $\Theta(d \sqrt{n})$; we have matching upper and lower bounds. However, the behavior is genuinely different in the region $d \ge \sqrt{n}$, where the upper bound is $O(d^{2/3} n^{2/3})$.
\end{abstract}
	
\paragraph*{Keywords.} distributed computing, low-bandwidth model, matrix multiplication
	
\paragraph*{Acknowledgements.} We are grateful to the anonymous reviewers for their helpful feedback on prior versions of this work.
	
\input{intro}
\input{preliminaries}
\input{ndn}
\input{dnd}

\bibliographystyle{plainurl}
\bibliography{da}
	
\end{document}

%% file: intro.tex
\section{Introduction}\label{sec:intro}

In this work we study the fundamental problem of multiplying rectangular matrices in a massively parallel, bandwidth-limited setting.

\subsection{Setting and Prior Work}

\subparagraph{Centralized setting.}

To set the scene, let us first recall what is known about \emph{square} matrices. Assume we are calculating the matrix product $X = AB$, where $A$ and $B$ are $n \times n$ matrices.

In the classical \emph{centralized} setting, for matrix multiplication over \emph{fields}, we have a long sequence of nontrivial algorithms, starting with Strassen's $O(n^{2.808})$-time algorithm \cite{strassen-1969-gaussian-elimination-is-not-optimal}, and the latest news is the $O(n^{2.371339})$-time algorithm from \cite{alman-duan-etal-2025-more-asymmetry-yields-faster}. These algorithms fundamentally exploit subtraction, and for multiplication over a general semiring not much is known beyond the trivial $O(n^3)$-time algorithms.

\subparagraph{Congested clique.}

The centralized algorithms have direct analogs in the massively parallel setting. Let us first consider the \emph{congested clique} model \cite{lotker-pavlov-etal-2003-mst-construction-in-o-log-log-n}; in this model we have $n$ computers, each computer has a direct communication link to all other computers, computation proceeds in synchronous rounds, and in each round we can send an $O(\log n)$-bit message over each communication link. This is equivalent to a setting where each node can send and receive $O(n \log n)$ bits in total in each round \cite{lenzen-2013-optimal-deterministic-routing-and-sorting}. We do not restrict local computation; we only care about the \emph{number of communication rounds}.

When we study matrix multiplication in the congested clique model, it is natural to assume that computer number $i$ initially holds row (or column) number $i$ of matrices $A$ and $B$, and it has to eventually hold row (or column) number $i$ of the product matrix $X$.

The main result of \cite{censor-hillel-kaski-etal-2019-algebraic-methods-in-the} connects matrix multiplication in the congested clique model with fast centralized matrix multiplication algorithms. Informally, their main result is this: Assume that we have a centralized matrix multiplication algorithm that uses only $O(n^\omega)$ operations, for some $2\le\omega\le 3$. Then there is a matrix multiplication algorithm in the congested clique model that takes only $O(n^{1-2/\omega})$ communication rounds. By plugging in the latest value $\omega \approx 2.371339$ from \cite{alman-duan-etal-2025-more-asymmetry-yields-faster} we obtain an algorithm for fields that runs in $O(n^{0.1566})$ rounds. For semirings the trivial choice of $\omega = 3$ yields an algorithm that runs in $O(n^{1/3})$ rounds.

\subparagraph{Sparse and rectangular matrices.}

However, in many applications we have matrices that are not dense, square matrices. E.g.\ \cite{le-gall-2016-further-algebraic-algorithms-in-the,censor-hillel-leitersdorf-turner-2018-sparse-matrix,censor-hillel-dory-etal-2021-fast-approximate-shortest} have studied matrix multiplication in the congested clique model for sparse and rectangular matrices.

To make our point clear, let us focus on the case that matrix $A$ has dimensions $n \times d$, matrix $B$ has dimensions $d \times n$, and the product matrix $X$ has hence dimensions $n \times n$ for some $d \le n$; we use notation \ndn{} to refer to this task. Furthermore, let us focus on the case of semirings, primarily so that we have convenient round numbers. Recall that when $d = n$, the round complexity in the congested clique model is $O(n^{1/3})$.

Now a particularly relevant prior work is \cite{censor-hillel-leitersdorf-turner-2018-sparse-matrix}. Adapted to this setting, their work shows that, for $d \ge \sqrt{n}$, there is an algorithm with round complexity $O(d^{2/3}/n^{1/3}+1)$ in the congested clique model. In particular, the round complexity of this algorithm smoothly grows from $O(1)$ to $O(n^{1/3})$ as $d$ grows from $\sqrt{n}$ to $n$.

For the congested clique model, this essentially ends the line of research for $d < \sqrt{n}$: if there is a constant-round algorithm, what else could we ask for?

However, this result is mainly an artifact of the way the congested clique model is defined. The model is simply too powerful, as we have $\Theta(n \log n)$ bits of bandwidth per node per round, and it is therefore ill-suited when analyzing massively parallel algorithms when the size of the input is relatively small.

\subparagraph{Low-bandwidth model.}

There is recent work \cite{gupta-hirvonen-etal-2022-sparse-matrix-multiplication,gupta-korhonen-etal-2025-low-bandwidth-matrix} that has studied \emph{sparse} matrix multiplication in the \emph{low-bandwidth} model; this is essentially the same model as what is known as the \emph{node-capacitated clique} or \emph{node-congested clique} in the literature \cite{augustine-ghaffari-etal-2019-distributed-computation-in}. In this model we have simply scaled down the bandwidth by a factor of $n$ in comparison with the congested clique model: in each round, each node can send and receive only one message with $O(\log n)$ bits.

All upper bounds from the congested clique model now directly apply, if we multiply the round complexities by $n$. For example, the square semiring algorithm now has the round complexity $O(n^{4/3})$, and \ndn{} can be solved in $O(d^{2/3}n^{2/3}+n)$ rounds. However, now it is not at all clear what happens when we go below $d = \sqrt{n}$. Intuitively, we would expect to be able to do much better than $\Theta(n)$ rounds for a small $d$, but exactly how does the round complexity scale? For example, would $\Theta(d^{2/3}n^{2/3})$ be the right complexity for all $d$?

\subsection{Our Contributions}

\subparagraph{Problem setting.}
In this work, we study rectangular matrix multiplication in the low-bandwidth model. We consider both of these cases, both for semirings and fields, for $d \le n$:
\begin{itemize}
	\item \ndn{}: multiply an $n \times d$ matrix by a $d \times n$ matrix, and the result is an $n \times n$ matrix.
	\item \dnd{}: multiply a $d \times n$ matrix by an $n \times d$ matrix, and the result is a $d \times d$ matrix.
\end{itemize}
We have $n$ computers. The input is distributed among them in a balanced manner. In each round, each computer can send and receive one $O(\log n)$-bit message (we assume that each matrix element fits in such a message). Eventually each computer must hold its own part of the output.

It turns out that the precise distribution of inputs is not particularly important; our algorithms can be adapted to any balanced input distribution and the lower bounds hold for any balanced input distribution. Hence, conveniently, we can assume that e.g.\ if a matrix has dimensions $n \times d$, computer number $i$ initially holds row $i$, which contains $d$ elements. The output matrix requires a bit more care in the \ndn{} case, since reorganizing the output is expensive, but again we can design our algorithms to work for natural output distributions and our lower bounds hold for any balanced output distribution.

\subparagraph{Results.}

Our results are summarized in \cref{tab:main-results,tab:main-results-fields,fig:bounds}. We present both upper and lower bounds on the round complexity of \ndn{} and \dnd{} in the low-bandwidth model, for semirings and fields. We have both unconditional lower bounds and conditional lower bounds. The conditional lower bounds are conditioned on the current \emph{square} matrix multiplication algorithms being optimal.

Our highlight result is related to \ndn{}. We completely characterize what happens in the region $d \le \sqrt{n}$: in that region we have an upper bound of $O(d\sqrt{n})$ and a near-matching \emph{unconditional} lower bound of $\Omega(d\sqrt{n})$.

Notably, this means that there is a \emph{phase transition} at $d = \Theta(\sqrt{n})$. At this point, the round complexity is $\Theta(n)$. Below this point, round complexity scales in proportion to $\Theta(d)$. However, above this point, the round complexity scales in proportion to $O(d^{2/3})$.

This also highlights how misleading a view of the complexity landscape we would get if we only look at the congested clique model. Only by zooming into the low-bandwidth model can we see that the true complexity of this problem is not, for example, $\Theta(d^{2/3}n^{2/3})$, but we can do strictly better for $d \le \sqrt{n}$.

Our results for \dnd{} show, in particular, that the complexity landscape of this problem is fundamentally different from \ndn{}. In particular, there is no sign of a phase transition, and we can solve \dnd{} faster than \ndn{} across all values of $d < n$. Here it is good to note that the total size of the input matrices is the same, and only the output matrices differ in size; the smaller output matrix significantly helps.

\begin{table*}[p]
	\centering
	\caption{Rectangular matrix multiplication over semirings in the low-bandwidth model.}
	\label{tab:main-results}
		\resizebox{\textwidth}{!}{%
	\begin{tabular}{lllll}
		\toprule
		& \multicolumn{2}{c}{\ndn{}} & \multicolumn{2}{c}{\dnd{}} \\
		\cmidrule(lr){2-3} \cmidrule(lr){4-5}
		Upper bound 
		& 
		$O\!\left(\min\left\{
		\begin{array}{@{}l@{}}
			n+d^{2/3}n^{2/3},\\
			d\sqrt n
		\end{array}
		\right\}\right)$
		& \Cref{thm:ndn_upper_bound}
		& \dndup{}
		& \Cref{thm:dnd_upper_bound} \\
		Unconditional LB 
		& \ndnlowuncond{}
		& \Cref{thm:ndn_unconditional_lowerbound}
		& \dndlowuncond{}
		& \Cref{thm:dnd_unconditional_lowerbound} \\
		Conditional LB
		& \ndnlowcond{}
		& \Cref{thm:ndn_conditional_lowerbound}
		& \dndlowcond{}
		& \Cref{thm:dnd_conditional_lowerbound} \\
		\bottomrule
	\end{tabular}%
		}
	\vspace{6mm}
\end{table*}

\begin{table*}[p]
	\centering
	\caption{Rectangular matrix multiplication over fields in the low-bandwidth model.}
	\label{tab:main-results-fields}
		\resizebox{\textwidth}{!}{%
	\begin{tabular}{lllll}
		\toprule
		& \multicolumn{2}{c}{\ndn{}} & \multicolumn{2}{c}{\dnd{}} \\
		\cmidrule(lr){2-3} \cmidrule(lr){4-5}
		Upper bound 
		& 
		$O\!\left(\min\left\{
		\begin{array}{@{}l@{}}
			n^{2-2/\omega},\\
			n+d^{2/3}n^{2/3},\\
			d\sqrt n
		\end{array}
		\right\}\right)$
		& \Cref{thm:ndn_upper_bound_fields}
		& \dndupfields
		& \Cref{thm:dnd_upper_bound_fields} \\
		Unconditional LB 
		& \ndnlowuncond
		& \Cref{thm:ndn_unconditional_lowerbound}
		& \dndlowuncond
		& \Cref{thm:dnd_unconditional_lowerbound} \\
		Conditional LB
		& $\Omega(\frac{d^{2-4/{3\omega}}}{n^{2/3\omega}}) $
		& \Cref{thm:ndn_conditional_lowerbound_fields}
		& $\Omega(\frac{d^2}{n^{2/\omega}})$
		& \Cref{thm:dnd_conditional_lowerbound_fields}\\
		\bottomrule
	\end{tabular}%
		}
	\vspace{6mm}
\end{table*}

\begin{figure*}[p]
	\centering
	
	\begin{subfigure}{0.49\textwidth}
		\centering
		
		\begin{tikzpicture}[
			scale=4.8,
			axis/.style={thick,->},
			tick/.style={thin},
			]
			\fill[myorange!50]
			(0,0) -- (0,1/2) -- (1/2,1) -- (1,1) -- (1,0) --  cycle;
			\draw[myorange,thick]  (1/2,1) -- (1,1);
			\filldraw[myblue!50, draw= myblue]
			(0,1/2) -- (0,3/2) -- (1,3/2) -- (1,4/3) -- (1/2, 1)-- cycle;
			\draw[dashed] (1/2,1) -- (1,4/3);
			\draw[dashed] (0,1/2) -- (1,3/2);
			\fill[mygrey!40]
			(11/14,1) -- (1,4/3) -- (1,1) -- cycle;
			\draw[dashed] (1/7,0) -- (1,4/3);
			\draw[axis] (0,0) -- (1.1,0) node[above] {$d=n^x$};
			\draw[axis] (0,0) -- (0,1.6) node[above] {$T(n)= n^y$};
			
			\foreach \x/\lab in {0/0, {1/7}/{\frac{1}{7}}, 0.5/{\frac{1}{2}}, 1/1}
			{
				\draw[tick] (\x,0) -- (\x,-0.02) node[below=2pt] {$\lab$};
			}
			
			\draw[tick] (11/14,0) -- (11/14,-0.02)
			node[below=2pt, xshift=-2pt] {$\frac{11}{14}$};
			
			\draw[tick] (5/6,0) -- (5/6,-0.02)
			node[below=2pt, xshift=2pt] {$\frac{5}{6}$};

			\foreach \y/\lab in {0.5/{\frac{1}{2}}, 1/1, 1.333/{\frac{4}{3}}, 1.5/{\frac{3}{2}} }
			{
				\draw[tick] (0,\y) -- (-0.02,\y) node[left=2pt] {$\lab$};
			}
			\draw[dashed] (11/14,0) -- (11/14,1.55);
			\draw[dashed] (5/6,0) -- (5/6,1.55);
			\draw[dashed] (1/2,0) -- (1/2,1.15);
			\draw[dashed] (0,4/3) -- (1.05,4/3);
			\draw[dashed] (0,1) -- (1.05,1);
		\end{tikzpicture}

		\caption{\ndn{} bounds}
		\Description{This is a combination of lower bounds and upper bounds for \ndn{} multiplication}
		\label{fig:ndn}
	\end{subfigure}
	\hfill
	\begin{subfigure}{0.49\textwidth}
		\centering
		
			\begin{tikzpicture}[
				scale=4.8,
			axis/.style={thick,->},
			tick/.style={thin},
			]
			
			\fill[mygrey!40]
			(1/3,0) -- (1,4/3) -- (1,1) -- (1/2,0) -- cycle;
			\draw[dashed] (1/3,0) -- (1,4/3);
			\fill[myorange!50]
			(1/2,0) -- (1,1) -- (1,0) -- cycle;
			\draw[dashed] (1/2,0) -- (1,1);
			
			\fill[myblue!50]
			(0,0) -- (0,3/2) -- (1,3/2) -- (1,4/3) -- cycle;
			
			\draw[axis] (0,0) -- (1.1,0) node[above] {$d=n^x$};
			\draw[axis] (0,0) -- (0,1.6) node[above] {$T(n)= n^y$};
			\foreach \x/\lab in {0/0, 0.333/{\frac{1}{3}}, 0.5/{\frac{1}{2}}, 1/1}
			{
				\draw[tick] (\x,0) -- (\x,-0.02) node[below=2pt] {$\lab$};
			}
			\foreach \y/\lab in {0.5/{\frac{1}{2}}, 1/1, 1.333/{\frac{4}{3}}, 1.5/{\frac{3}{2}} }
			{
				\draw[tick] (0,\y) -- (-0.02,\y) node[left=2pt] {$\lab$};
			}
			\draw[dashed] (0,0) -- (1,4/3);
			\draw[dashed] (0,4/3) -- (1.05,4/3);
			\draw[dashed] (0,1) -- (1.05,1);
			
		\end{tikzpicture}
		
		\caption{\dnd{} bounds}
		\Description{This is a combination of lower bounds and upper bounds for \dnd{} multiplication}
		\label{fig:dnd}
	\end{subfigure}
	
	\caption{This figure summarizes the results from \Cref{tab:main-results}. Here, blue indicates ranges where we have an upper bound, whereas in orange regions we have an unconditional lower bound, and in gray regions we have a conditional lower bound.}
	\Description{This figure summarizes the results from \Cref{tab:main-results}. Here, blue indicates ranges where we have an upper bound, whereas in orange regions we have an unconditional lower bound, and in gray regions we have a conditional lower bound.}
	\label{fig:bounds}
\end{figure*}

\subsection{Techniques}

\subparagraph{Upper bounds.}
Our upper bounds are based on the familiar idea of subdividing the input and output matrices appropriately into submatrices. However, as we will see, we need complementary approaches for \ndn{} and \dnd{}:
\begin{itemize}
	\item For \ndn{}, we essentially associate one block of the output matrix to one computer. Then it suffices to route the right parts of input to the right computer. Here the key primitive that we need is \emph{broadcasting}, as the same part of the input is needed by multiple computers.
	\item For \dnd{}, we split the input so that we have one instance of \anymatmul{d}{d}{d} per $d$ computers. The final result is the sum of the results of these subproblems, and hence a key primitive that we need in the end is \emph{aggregation}.
\end{itemize}
Here it is notable that such a simple partitioning strategy for \ndn{} can yield an asymptotically optimal algorithm for all $d \le \sqrt{n}$. This is one of our take-home messages: in this region, simple, practical, and easy-to-implement algorithms are provably optimal.

\subparagraph{Lower bounds.}
The technically more interesting part of this work is the lower bounds.

In the \emph{conditional} lower bounds we present reductions that show that fast rectangular matrix multiplication algorithms can be used to design faster algorithms for square matrices. A key technical challenge in the reductions is that, in our setting, matrix dimensions are coupled with the number of computers. For example, in the proof of \cref{thm:ndn_conditional_lowerbound} we therefore split the reduction in two steps:
\begin{enumerate}
	\item We show that if we have a fast algorithm for solving \ndn{} with $n$ computers, we can also solve \kkk{} fast (for some appropriately-chosen $k$) with the \emph{same} number of computers $n$. Note that here the matrix dimensions are different from the number of computers.
	\item Then we show that if we can solve \kkk{} with $n$ computers, we can also solve \anymatmul{N}{N}{N} with $N$ computers (for some appropriately-chosen $N$). Here we are finally in the familiar setting where the matrix dimensions agree with the number of computers.
\end{enumerate}

In the \emph{unconditional} lower bounds, we make use of information-theoretic arguments, where the spirit is this: a fast matrix multiplication algorithm could be used to transmit too many bits between Alice and Bob, in comparison with the total available bandwidth between them. However, here the main technical challenge is that we wanted to show that our lower bounds are not merely artifacts of some specific assumptions related to the input/output distribution.

To this end, we let the algorithm designer choose any fixed division of input and output, and show that the lower bound still holds, as long as the distribution is balanced (of course without the balance requirement the task is trivial: if all the input is in computer 1, and all the output needs to be held by computer 1, there is no need for any communication).

This means that our lower bound has to be adaptive to whatever choice of the input and output distribution the algorithm designer makes. For example, in the proof of \cref{thm:ndn_unconditional_lowerbound} we connect the existence of suitable adversarial strategies with a certain graph coloring problem, and show that the graph coloring problem always admits a solution, and hence we also always have an adversarial strategy, no matter what choices the algorithm designer makes on the distribution of inputs and outputs.

\subsection{Discussion}

\paragraph{Comparison with the MPC setting.}

Rectangular matrix multiplication has been considered in the literature in many different settings. In addition to the prior work in the congested clique model, there is also recent work that has studied matrix multiplication in the MPC (massively parallel computation) model \cite{joshi-deshmukh-etal-2026-matrix-multiplication-in-the}. The MPC setting is rather similar in spirit to the low-bandwidth model; however, the phase transition around $d \approx \sqrt{n}$ in the \ndn{} case seems to be a feature unique to the low-bandwidth model---there are no signs of a similar feature in the MPC results.

\paragraph{Comparison with BSP and distributed-memory matrix multiplication.}

There is a large body of work in the high-performance computing community on the communication complexity of parallel matrix multiplication, in models such as distributed-memory machines, BSP, and BSPRAM~\cite{demmel-eliahu-etal-2013-communication-optimal-parallel,daas-ballard-etal-2022-brief-announcement-tight-memory, irony-toledo-tiskin-2004-communication-lower-bounds-for, ballard-demmel-etal-2012-brief-announcement-strong}. These works are closely related to ours, but they answer a different kind of question. Typically, one fixes the arithmetic circuit of a given algorithm to be evaluated---for example, the conventional cubic matrix-multiplication DAG, or a Strassen-like DAG---and then asks how much data movement is necessary for any parallel implementation that evaluates that circuit.

For conventional rectangular matrix multiplication, Demmel et al.~\cite{demmel-eliahu-etal-2013-communication-optimal-parallel} and Al Daas et al.~\cite{daas-ballard-etal-2022-brief-announcement-tight-memory} give communication-optimal algorithms and matching lower bounds in a distributed-memory model.
Their upper bound results are informative for our setting. If one plugs in $P=n$ processors and dimensions \ndn{}, the resulting bandwidth expressions recover the same two regimes that appear in our upper bounds: $d\sqrt n$ and $d^{2/3}n^{2/3}$. Similarly, for dimensions \dnd{}, the corresponding expression gives $d^{4/3}$. Thus, the upper bounds in communication-avoiding literature can often be easily translated to upper bounds in the low-bandwidth model.

The key difference is that these lower bounds are not lower bounds for the matrix multiplication problem as an arbitrary distributed task. They are lower bounds for evaluating a particular and fixed arithmetic circuit. For example, the lower bounds of Irony et al.~\cite{irony-toledo-tiskin-2004-communication-lower-bounds-for} explicitly apply to \emph{conventional} matrix multiplication: the computation consists of the elementary products $a_{ij}b_{jk}$ and their sums, and the analysis does not apply to Strassen's algorithm or other non-conventional algorithms. Similarly, the lower bounds for Strassen-based algorithms~\cite{ballard-demmel-etal-2012-brief-announcement-strong} fix the Strassen computation DAG, and then prove that any parallel implementation of that DAG must communicate many words. These results still leave open whether a different distributed algorithm could solve the same problem faster.

Our lower bounds do not assume that the algorithm evaluates a prescribed arithmetic circuit. Local computation is unrestricted, and the algorithm may use any strategy as long as, after the communication rounds, machines collectively output the product matrix. In particular, our lower bounds also account for techniques such as \emph{communication by silence}, where information can be encoded or derived not only in the messages that are sent, but also in the absence of messages~\cite{dietzfelbinger-kutylowski-reischuk-1994-exact-lower,roughgarden-vassilvitskii-wang-2018-shuffles-and}. Moreover, our bounds already hold for the seemingly easier special case of matrix multiplication over the Boolean semiring $(\{0,1\},\lor,\land)$.
In this sense, our lower bound for \ndn{} with $d\le \sqrt n$ shows that the $d\sqrt n$ barrier suggested by communication-avoiding conventional matrix multiplication is not merely a limitation of such algorithms: in the low-bandwidth model it is an unconditional barrier for all algorithms.

Upper bounds in BSP and BSPRAM also fit naturally into this picture. McColl and Tiskin~\cite{mccoll-tiskin-1999-memory-efficient-matrix} give BSPRAM algorithms for matrix multiplication. If one focuses on the bandwidth term and sets the number of processors to $p=n$, these results recover the usual baseline bounds $O(n^{4/3})$ for standard semiring multiplication and $O(n^{2-2/\omega})$ for fast matrix multiplication over fields; these are consistent with the square-matrix primitives that we use as black boxes.

\paragraph{Comparison with the supported setting.}

We point out that prior work on sparse matrix multiplication in the low-bandwidth setting
\cite{gupta-hirvonen-etal-2022-sparse-matrix-multiplication,gupta-korhonen-etal-2025-low-bandwidth-matrix} uses the \emph{supported} version of the model: the sparsity structure of the matrices is known in advance, and only the exact values of those elements are known at runtime. We emphasize that in the present work, we do not need to make any such structural assumptions, as our matrices are dense.

\subsection{Open Problems for Future Work}

The landscape of \ndn{} is now fully understood for $d \le \sqrt{n}$. On the other hand, proving a tight unconditional bound for $d \ge \sqrt{n}$ would likely require major breakthroughs in our understanding of matrix multiplication in general. However, we believe that proving a tight \emph{conditional} lower bound for $d \ge \sqrt{n}$ (or at least closing \emph{some} of the gap) is within the reach of current techniques.

The major open question is understanding the landscape of \dnd{}, where we currently have wide gaps for small values of $d$. We are not aware of fundamental barriers there. A particularly pressing question is understanding the case of $d \approx n^{1/3}$ for \dnd{}, which is the point where we do not have any lower bounds, conditional or unconditional.

%% file: preliminaries.tex
\section{Preliminaries}\label{sec:prelim}

This section contains the definition of the model and problem, as well as a selection of helpful results and observations in the low-bandwidth model.

\subsection{Model and Problem Definitions} \label{subsec:definitions}

We work in the \emph{low-bandwidth model}, a distributed system that consists of $n$ machines (or nodes), denoted by
$M = \{1,2,\dots,n\}$.
Computation proceeds in synchronous communication rounds.

\subparagraph{Local computation.}
Each machine has unbounded local computational power and unbounded local memory. Local computation is free; the complexity measure of interest is the number of communication rounds.

\subparagraph{Communication.}
In each round, every machine may
\begin{itemize}
	\item send exactly one message to another machine, and
	\item receive exactly one message from another machine.
\end{itemize}
Each message contains at most $O(\log n)$ bits.
Messages sent in round $r$ are received by the end of the same round.
Equivalently, in each round the communication forms a directed graph of maximum in-degree and out-degree~$1$.

\subparagraph{Problem.} In this work, we study rectangular matrix multiplication. Given matrices $A,B$, the task is to compute the product $X=AB$. We denote by \anymatmul{a}{b}{c} the task of multiplying an $a\times b$ matrix $A$ by a $b\times c$ matrix $B$ in the low-bandwidth model. In this work, we will focus on \ndn{} and \dnd{} multiplications, where $d\in [n]$.

Throughout the paper, we perform matrix multiplication over an arbitrary semiring $(\mathcal{S}, +, \cdot)$. 
We assume that every element of $\mathcal{S}$ admits an encoding of size $O(\log n)$ bits. 
In particular, a single semiring element fits within one communication message in the low-bandwidth model.

\subparagraph{Input and output distribution.}
The input is evenly distributed among the $n$ machines, and each machine is responsible for producing an evenly sized portion of the output.

More concretely, for the instances considered in this work:
\begin{itemize}
	\item each machine initially holds at most $O(d)$ input entries, 
	\item each machine is required to output at most $O(d)$ entries of \dnd{} instances, or $O(n)$ entries of \ndn{} instances.
\end{itemize}

The assignment of input and output indices to machines is fixed in advance and known to all machines (however, the values stored in the entries are not). In particular, machines know the value of $d$ and $n$, and which multiplication instance is given.

\subparagraph{Complexity.}
An algorithm runs in $T(n)$ rounds if, after $T(n)$ communication
rounds, all machines have produced their assigned output.

\subsection{Observations and Toolbox}\label{subsec:toolbox}
 	
 For any integer $n>0$, we denote by $[n]$ the set $\{1,\ldots, n\}$.

\begin{theorem}[Square matrix multiplication~{\cite{censor-hillel-kaski-etal-2019-algebraic-methods-in-the}, \cite{gupta-hirvonen-etal-2022-sparse-matrix-multiplication}}]
	\label{thm:dense-mm-known}
	Given $n \times n$ matrices $A,B$ in the low-bandwidth model, $X=AB$ can be computed for semirings in $O(n^{4/3})$ rounds, or in $O(n^{2-2/\omega})$ rounds for fields.
\end{theorem}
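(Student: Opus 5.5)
The plan is to derive the statement from the congested clique results of \cite{censor-hillel-kaski-etal-2019-algebraic-methods-in-the} by a round-by-round simulation. Recall that in the congested clique, square $n \times n$ matrix multiplication takes $O(n^{1/3})$ rounds over semirings and $O(n^{1-2/\omega})$ rounds over fields. Those algorithms assume that node $i$ holds row $i$ of $A$ and of $B$ and outputs row $i$ of $X$. So the proof has three steps: convert an arbitrary balanced input distribution into this row distribution, simulate the congested clique algorithm with an $n$-fold slowdown, and convert the row distribution of $X$ into the required balanced output distribution.

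For the simulation step, consider one congested clique round. Node $i$ sends at most one $O(\log n)$-bit message to each node $j$. In the low-bandwidth model we spend $n$ rounds on it. In sub-round $r \in \{0,\dots,n-1\}$, node $i$ sends its message destined for node $i+r \bmod n$, if there is one. For each fixed $r$, the map $i \mapsto i+r$ is a permutation, so every node sends and receives at most one message per sub-round, as the model requires. Hence $T$ congested clique rounds become $nT$ low-bandwidth rounds. This gives $O(n \cdot n^{1/3}) = O(n^{4/3})$ rounds for semirings and $O(n \cdot n^{1-2/\omega}) = O(n^{2-2/\omega})$ rounds for fields. Local computation is free in both models, so it needs no further care.

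The redistribution steps are the part I expect to need an actual argument, since the input and output assignments are arbitrary apart from being balanced. Here is the general routing claim I would use. Suppose each machine must send at most $L$ messages and receive at most $L$ messages, with all sources and destinations known in advance. Then the routing finishes in $L$ rounds. To see this, form the bipartite multigraph with senders on one side, receivers on the other, and one edge per message. Its maximum degree is at most $L$, so by K\"onig's edge-coloring theorem its edges split into $L$ matchings. Each matching is a valid low-bandwidth round. The schedule depends only on the fixed, publicly known index assignment, so every machine can compute it locally.

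In the input step, each machine holds $O(n)$ entries of $A$ and $B$ and must receive $2n$ entries, so $L = O(n)$. The output step is symmetric: each node holds row $i$ of $X$ and must end up holding its $O(n)$ designated output entries. Both steps therefore take $O(n)$ rounds, which is dominated by the simulation cost. This completes the plan. The main subtlety is checking that the routing schedule needs no knowledge of the data, and that holds because only the index assignment, not the values, determines it.
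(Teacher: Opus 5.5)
Your proposal is correct and follows the paper's argument: the paper proves this theorem in one line, by simulating each congested clique round of the $O(n^{1/3})$-round semiring algorithm and of the $O(n^{1-2/\omega})$-round field algorithm of Censor-Hillel et al.\ in $O(n)$ low-bandwidth rounds. Your round-robin schedule and your K\"onig edge-coloring argument for converting between an arbitrary balanced distribution and the row distribution spell out details the paper leaves implicit; the edge-coloring step is the same idea as the paper's Redistribution observation.
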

\begin{proof}
	Follows from simulating each congested clique round of the $O(n^{1/3})$-round semiring algorithm (currently best known) from \cite{censor-hillel-kaski-etal-2019-algebraic-methods-in-the} and their $O(n^{1-2/\omega})$-round algorithm over fields in $O(n)$ rounds of the low-bandwidth model. 
\end{proof}

Similarly, by simulating the $O(d^{2/3}/n^{1/3}+1)$-round result from \cite{censor-hillel-leitersdorf-turner-2018-sparse-matrix} we have the following.
\begin{theorem}[Simulation of~{\cite{censor-hillel-leitersdorf-turner-2018-sparse-matrix}}]
	\label{thm:ndn_pastwork_upperbound}
	In the low-bandwidth model, there is an $O(d^{2/3}n^{2/3}+n)$-round algorithm that solves \ndn{} over semirings.
\end{theorem}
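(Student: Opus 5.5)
The plan is to derive \cref{thm:ndn_pastwork_upperbound} by simulating the congested clique algorithm of \cite{censor-hillel-leitersdorf-turner-2018-sparse-matrix}, in the same way \cref{thm:dense-mm-known} is obtained from \cite{censor-hillel-kaski-etal-2019-algebraic-methods-in-the}. Specialised to \ndn{} over semirings with $n$ nodes, that algorithm runs in $O(d^{2/3}/n^{1/3}+1)$ rounds when node $i$ initially holds row $i$ of $A$ and column $i$ of $B$, and finally outputs row $i$ of $X$.

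The core of the proof is a general simulation lemma. In one congested clique round, each node sends and receives at most $n-1$ messages of $O(\log n)$ bits. We view this as a bipartite multigraph between senders and receivers with maximum degree at most $n-1$. By K\H{o}nig's edge-colouring theorem, such a graph decomposes into $n-1$ matchings. Every matching respects the low-bandwidth constraint, since each node sends at most one message and receives at most one message in it. The decomposition is computed locally, which is possible because local computation is free; the remaining question is which node knows the whole communication pattern.

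This is the main obstacle, because the pattern may depend on the input values. There are two fixes. In \cite{censor-hillel-leitersdorf-turner-2018-sparse-matrix} the routing for dense inputs is oblivious, so the schedule is fixed in advance and known to all nodes. If that were not the case, we would instead implement each round with Lenzen's routing \cite{lenzen-2013-optimal-deterministic-routing-and-sorting}, which handles arbitrary demands where each node has $O(n)$ outgoing and incoming messages. A congested clique round of Lenzen's scheme uses a constant number of rounds with a known structure, so simulating it costs $O(n)$ low-bandwidth rounds. Either way, one congested clique round costs $O(n)$ low-bandwidth rounds.

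Combining these gives $O(n)\cdot O(d^{2/3}/n^{1/3}+1) = O(d^{2/3}n^{2/3}+n)$ rounds. Input and output conventions are not a concern: the paper's model fixes one balanced assignment, and with $d\le n$ each node holds $O(d)$ input entries and $O(n)$ output entries. Converting between any two such balanced assignments is a single routing instance of load $O(n)$, so it costs $O(n)$ additional rounds by the same simulation. This overhead is absorbed into the $+n$ term.
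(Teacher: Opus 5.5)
Your proposal is correct and follows the paper's approach: the paper likewise obtains the bound by simulating each congested clique round of the $O(d^{2/3}/n^{1/3}+1)$-round algorithm of \cite{censor-hillel-leitersdorf-turner-2018-sparse-matrix} in $O(n)$ low-bandwidth rounds. The obstacle you identify does not actually arise, because a congested clique round carries at most one message per ordered pair of nodes, so you can apply K\H{o}nig's theorem to the complete sender--receiver graph rather than to the actual message pattern (e.g.\ in round $r\in[n-1]$ every node $i$ sends its message, if any, to node $i+r$ with indices modulo $n$); this gives a fixed, input-independent schedule, so you need neither the obliviousness claim about the cited algorithm nor the detour through Lenzen's routing, which is itself a congested clique algorithm and so would be circular here.
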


Next, we briefly mention some useful tools. An important note here is that machines do not coordinate during the algorithm to know to which machine they need to send a message (or from which machine they need to receive one); this scheduling is already hard-coded into the algorithm and execution is merely the step in which the values are revealed and transferred. This is done to avoid the costly runtime coordination overhead. We will see this pattern in our algorithms as well.

\begin{lemma}[Broadcast~{\cite[Lemma 6.9]{gupta-korhonen-etal-2025-low-bandwidth-matrix}}]
	 \label{lem:broadcast}
	In the low-bandwidth model, broadcasting a single bit $v \in \{0,1\}$ to all machines takes $\Theta(\log n)$ rounds. Moreover, if $k>1$ messages are given, broadcasting can be done in $O(k+ \log n)$ rounds. 
\end{lemma}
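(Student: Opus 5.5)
Both halves of the statement need to be shown: the $\Theta(\log n)$ bound for a single bit, and the $O(k+\log n)$ bound for $k$ messages. In both, the source machine $s$ that initially holds the data is known in advance, and all schedules are fixed ahead of time as the paper requires.

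\emph{Single bit, upper bound.} I would use the recursive-doubling tree. Let $I_r$ be the set of informed machines after round $r$, with $I_0=\{s\}$. Fix an injective map from $I_r$ into the uninformed machines, which exists as long as enough of them remain. In round $r+1$, every machine in $I_r$ sends the bit to its image. Each informed machine sends one message and each newly informed machine receives one, so the degree constraints hold. Since $|I_{r+1}|=\min\{2|I_r|,n\}$, all $n$ machines are informed after $\lceil\log_2 n\rceil$ rounds.

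\emph{Single bit, lower bound.} The subtlety is that messages can convey information by being absent, so I would argue through dependence rather than through who sent what. Let $D_r$ be the set of machines whose state after round $r$ can depend on $v$, with $D_0=\{s\}$. The schedule, meaning which machine may send to which in each round, is a fixed directed graph of in- and out-degree at most $1$. A machine's state after round $r+1$ depends only on its own state and on the message, or absence of one, on its single incoming edge. Hence $|D_{r+1}|\le 2|D_r|$, and informing all $n$ machines needs $r\ge\log_2 n$ rounds.

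\emph{$k$ messages.} Doubling each message separately costs $k\log n$ rounds, so I would pipeline. The plan is to embed a spanning structure in which every machine forwards each item once to each child, and a staggered schedule keeps every machine receiving and sending at most one message per round.

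1. Take a binomial or Fibonacci-style broadcast tree and pipeline it in the style of Bar-Noy--Kipnis, or use $k$ independent binary-tree broadcasts with staggered start times.
2. Simpler alternative: arrange the machines in a path, with the source sending item $j$ in round $j$ and each node forwarding in the next round. This costs $k+n$ rounds, which is too slow.
3. So I would instead use a complete binary tree of depth $\log n$. Each internal node must send every item to two children, taking 2 rounds per item, so the source injects one item every 2 rounds. A node alternates sending to its left and right child while receiving from its parent every other round. Items therefore flow at rate $1/2$ with latency $O(\log n)$.

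The total is $2k+O(\log n)=O(k+\log n)$ rounds.

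\emph{Main obstacle.} The delicate step is checking that the pipelined binary-tree schedule never makes a node receive or send two messages in the same round. I would handle this with a time-indexing argument. A node at depth $h$ receives item $j$ at time $2j+2h$ (one unit per level hop plus a buffer), sends it to its left child at time $2j+2h+1$ and to its right child at time $2j+2h+2$. The right-child send of item $j$ falls at time $2(j+1)+2h$, which is exactly when the node receives item $j+1$. Receiving and sending in the same round is permitted, so this collision is harmless. Sends occupy every round and receives every even round, and neither ever doubles up. If this parity bookkeeping turns out to be messy, the fallback is to split the items into two interleaved streams over two edge-disjoint trees.
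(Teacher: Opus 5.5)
Your upper bounds are correct and follow the paper. The single-bit case is the same recursive doubling, and your pipelined binary tree is a concrete instance of the pipelining the paper attributes to Kannan--Ballard--Park. One small slip there: a left child of a depth-$h$ node receives item $j$ at time $2j+2h+1$, not at $2j+2(h+1)$, so it must hold the item for one round. Since every node has a single parent, it still receives at most once per round, and the total remains $O(k+\log n)$.

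The gap is in your $\Omega(\log n)$ lower bound, which the paper itself only cites. You assume that each round's communication graph is fixed in advance, but the model does not require this. The paper's remark about hard-coded scheduling describes how its own algorithms are built. A lower bound must also cover algorithms in which a machine chooses its recipient based on its state.

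Without that assumption, your inequality $|D_{r+1}|\le 2|D_r|$ is false. Suppose the source sends to machine $a$ when $v=0$ and to machine $b$ when $v=1$. After one round, $s$, $a$ and $b$ all have $v$-dependent states, since each of $a$ and $b$ learns $v$ from receiving or not receiving. So $|D_1|=3$.

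The repair keeps your dependence idea but compares the two executions directly.
\begin{itemize}
\item Let $D_r$ be the set of machines whose state after round $r$ differs between the executions with $v=0$ and $v=1$.
\item Machines outside $D_r$ behave identically in round $r+1$: same recipient, same content.
\item Suppose $w\notin D_r$ enters $D_{r+1}$. Then what $w$ receives, including silence, differs between the two executions.
\item If every machine sending to $w$ in either execution were outside $D_r$, each would send the same thing to $w$ in both executions. Since in-degree is at most $1$, $w$'s reception would then be identical in both, a contradiction.
\item Hence $w$ is the recipient of some $u\in D_r$ in at least one execution.
\end{itemize}
Each $u\in D_r$ has at most two recipients across the two executions. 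This gives $|D_{r+1}|\le 3|D_r|$, and therefore $r\ge\log_3 n=\Omega(\log n)$.
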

\begin{proof}
Broadcasting can be done in a power-jumping manner. Suppose machine~$1$ initially holds the value $v$. In round $i$, there are $2^{i-1}$ machines that know $v$. Each of them sends $v$ to a distinct machine that does not yet know it,
for example to machines $\{2^{i-1}+1, \ldots, 2^{i}\}$. Thus after round $i$, $2^i$ machines know $v$. After $O(\log n)$ rounds, all machines know $v$. The argument works for any value $v$ that fits in a single message. By carefully interleaving broadcast instances of the above form~\cite{kannan-ballard-park-2016-a-high-performance-parallel}, one can schedule all the messages in a pipeline manner to achieve $O(k+ \log n)$ rounds.

For details of the lower bound, we refer the reader to \cite[Lemma 6.9]{gupta-korhonen-etal-2025-low-bandwidth-matrix}.
\end{proof}

\begin{lemma}[Aggregation~{\cite[Lemma 6.4]{gupta-korhonen-etal-2025-low-bandwidth-matrix}}] \label{lem:aggregation} 
	In the low-bandwidth model where all $n$ machines are given a single bit in $\{0,1\}$, computing logical AND, OR, and sum of the values takes $\Theta(\log n)$ rounds. Moreover, if machines are instead given $k$ values that fit in a message, aggregation can be done in $O(k + \log n)$ rounds.
\end{lemma}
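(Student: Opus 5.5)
The statement has two parts: a $\Theta(\log n)$ bound for aggregating a single bit (AND, OR, sum), and an $O(k+\log n)$ upper bound for $k$ values per machine. I would prove the upper bounds by mirroring \cref{lem:broadcast}, i.e.\ running the broadcast tree in reverse, and obtain the lower bound by reduction from the broadcast lower bound.

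\emph{Single-value upper bound.} I would use a binary-tree (reverse power-jumping) schedule. In round $i$, for $i=1,\dots,\lceil\log_2 n\rceil$, every machine $j$ with $j \equiv 2^{i-1}+1 \pmod{2^i}$ sends its current partial aggregate to machine $j-2^{i-1}$. The receiver combines the incoming value with its own using the associative operation (AND, OR, or $+$).

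\begin{itemize}
\item The schedule is hard-coded, and in each round each machine sends at most one message and receives at most one message.
\item After round $i$, each machine $j \equiv 1 \pmod{2^i}$ holds the aggregate of the block $\{j,\dots,j+2^i-1\}$. After $\lceil\log_2 n\rceil$ rounds, machine~$1$ holds the total.
\item For sums, partial sums are bounded by $n\cdot 2^{O(\log n)}$ (or they stay within the semiring's $O(\log n)$-bit encoding), so every message fits.
\item If every machine must learn the result, I would append the broadcast of \cref{lem:broadcast}, which costs another $O(\log n)$ rounds.
\end{itemize}

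\emph{$k$ values.} Here the $k$ aggregation instances have to be pipelined; this is the main obstacle. The naive idea is to start instance $t$ in round $t$ on the same tree. That fails, because in the tree above a node receives messages at different levels in different rounds, so staggered instances would collide at a receiver. I would try two fixes.

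\begin{itemize}
\item \textbf{Time reversal (first choice).} Time-reverse the pipelined broadcast schedule of \cref{lem:broadcast} from \cite{kannan-ballard-park-2016-a-high-performance-parallel}. Reversing every message of a valid schedule still gives in-degree and out-degree at most $1$ per round. A broadcast tree reversed in time is an aggregation tree for the same instance. Each machine combines its own input with the values it receives before forwarding, which is valid because of associativity and commutativity. This yields $O(k+\log n)$ rounds.
\item \textbf{Explicit fallback.} Split the $n$ machines into groups of size $k$ and reduce each coordinate in two phases. First, within each group, a transpose-like all-to-all routing of $O(k)$ rounds makes the $t$-th machine of the group responsible for coordinate $t$ and sum it locally. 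Second, for each coordinate $t$, the $n/k$ responsible machines (one per group) aggregate with the single-value tree. These $k$ trees are vertex-disjoint and run in parallel in $O(\log(n/k))$ rounds. The total is $O(k+\log n)$.
\end{itemize}

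\emph{Lower bound $\Omega(\log n)$.} The set of machines whose input can influence machine~$1$'s state at most doubles in each round, since a machine's state after a round depends only on its previous state and one received message. If fewer than $n$ machines can influence machine~$1$, some machine's bit is never seen by it. Flipping that bit changes OR (set all other bits to $0$), AND (set all other bits to $1$), and the sum, while machine~$1$'s state stays the same, a contradiction. Hence at least $\log_2 n$ rounds are needed. Because the argument rests only on this reachability doubling, it also covers communication by silence. This matches \cite[Lemma 6.4]{gupta-korhonen-etal-2025-low-bandwidth-matrix}.
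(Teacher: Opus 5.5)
Your upper bounds are correct and follow the paper's route: run the power-jumping broadcast in reverse, and pipeline the $k$ instances as in \cite{kannan-ballard-park-2016-a-high-performance-parallel}. Your group-transpose fallback is a correct, self-contained alternative. The gap is in the lower bound. The claim that the set of machines whose input can influence machine~$1$ at most doubles per round is false in this model, because the \emph{identity} of the machine sending to $v$ may depend on the input. For example, take machines $v,v',u,b,w$:
\begin{itemize}
\item In round~1, $v'$ sends $x_{v'}$ to $v$, $b$ sends $x_b$ to $u$, and $u$ sends $x_u$ to $w$.
\item In round~2, $u$ sends $(x_u,x_b)$ to $v$ if $x_u=0$, and $w$ sends $(x_w,x_u)$ to $v$ if $x_u=1$.
\end{itemize}
On every input, each machine sends and receives at most one message per round. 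Yet after two rounds all five inputs can influence $v$'s state, which exceeds $2^2$. Your justification (``previous state plus one received message'') tacitly fixes the sender. Input-dependent addressing and silence are exactly what break the doubling, so your remark that the argument covers communication by silence has it backwards: the argument is valid only for oblivious schedules. This is the same obstruction that led Cook, Dwork and Reischuk to argue via sensitivity at a critical input, rather than influence sets, for OR on CREW PRAMs.

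The $\Omega(\log n)$ bound does survive if you count sensitive positions instead of influencing machines. Let $K(v,r,I)$ be the set of $i$ such that $v$'s state after round $r$ differs on $I$ and on $I\oplus e_i$, and let $K_r=\max_{v,I}|K(v,r,I)|$, so $K_0\le 1$.
\begin{itemize}
\item Suppose some $u_0$ sends to $v$ in round $r$ on $I$. For $i\notin K(v,r-1,I)\cup K(u_0,r-1,I)$, machine $u_0$ sends the same message on $I\oplus e_i$, and the in-degree constraint rules out any other sender. Hence $|K(v,r,I)|\le 2K_{r-1}$.
\item Suppose nobody sends to $v$ on $I$. Let $Y$ be the set of $i$ for which some $w_i$ sends to $v$ on $I\oplus e_i$; then $K(v,r,I)\subseteq K(v,r-1,I)\cup Y$. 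Since $i\in K(w_i,r-1,I)$, each sender accounts for at most $K_{r-1}$ elements of $Y$. For $i,j\in Y$ with $w_i\neq w_j$, the input $I\oplus e_i\oplus e_j$ forces $j\in K(w_i,r-1,I\oplus e_i)$ or $i\in K(w_j,r-1,I\oplus e_j)$, since otherwise both would send to $v$. Counting such pairs gives $\tfrac{1}{2}|Y|(|Y|-K_{r-1})\le |Y|K_{r-1}$, so $|Y|\le 3K_{r-1}$.
\end{itemize}
Either way $K_r\le 4K_{r-1}$. At $I=0^n$ (OR, sum) or $I=1^n$ (AND), the output machine must be sensitive to all $n$ positions, so $4^T\ge n$ and $T\ge \log_4 n$. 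Alternatively, you can simulate each round on a CREW PRAM with one mailbox cell per recipient (in-degree at most $1$ makes writes exclusive) and invoke the known CREW lower bound for OR. Or you can simply defer to \cite{gupta-korhonen-etal-2025-low-bandwidth-matrix}, as the paper does.
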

\begin{proof}
	The idea is similar to \Cref{lem:broadcast}, but done in the reverse order. Similarly, this works for any $v$ that fits into a message, and can be generalized to $k$ values by the same pipelining idea of \cite{kannan-ballard-park-2016-a-high-performance-parallel}.
	
	For details of the lower bound, we refer the reader to \cite[Lemma 6.4]{gupta-korhonen-etal-2025-low-bandwidth-matrix}.
\end{proof}

  We occasionally refer to a \emph{distribution} $D$ of data to emphasize which machine $m \in M$ holds which value $v$ (e.g. an element of a matrix); here $v$ is typically small enough to fit in a single message, i.e. $O(\log n)$ bits.
 
 \begin{observation}[Redistribution] \label{obs:redistribution}
 	Given a distribution $D$ of $d$ values per machine, we can rearrange the values to any distribution $D'$ of $d$ values per machine. This is done thanks to the fact that the bipartite graph of senders and receivers where a value's home and destination machine form an edge is $d$-regular, and therefore a $d$-coloring of the edges exists. One simply iterates over color classes and sends all messages of a color class in one round. Computing the coloring requires no communication and can be done locally by all machines as long as the distributions $D$ and $D'$ are known globally. 
 \end{observation}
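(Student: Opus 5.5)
The plan is to reduce the redistribution to an edge-colouring problem on a bipartite multigraph and then apply König's edge-colouring theorem. Build the bipartite multigraph $G=(M_{\mathrm{send}}\cup M_{\mathrm{recv}},E)$, where $M_{\mathrm{send}}$ and $M_{\mathrm{recv}}$ are two disjoint copies of the machine set $M$. For every value $v$, with home machine $D(v)=m$ and destination machine $D'(v)=m'$, put one edge between the copy of $m$ in $M_{\mathrm{send}}$ and the copy of $m'$ in $M_{\mathrm{recv}}$. Parallel edges are allowed, since several values may travel between the same pair of machines. A machine sending to itself gives an edge between its two copies; such a value needs no communication, but keeping the edge is harmless. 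Every machine holds at most $d$ values under $D$ and at most $d$ under $D'$, so every vertex of $G$ has degree at most $d$.

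Next, invoke König's theorem: a bipartite multigraph of maximum degree $\Delta$ has a proper edge colouring with $\Delta$ colours. Applied to $G$, this gives a colouring $c\colon E\to[d]$. If one wants to quote the regular version literally, first pad $G$ with dummy edges until it is $d$-regular (always possible for bipartite multigraphs with equal side sizes), colour it, and then discard the dummy edges. Each colour class is a matching in $G$. Hence in a colour class every machine appears at most once as a sender and at most once as a receiver, which is exactly the in-degree/out-degree-$1$ constraint of the model. The algorithm runs for $d$ rounds, and in round $i$ every value whose edge has colour $i$ is sent from its home machine to its destination. Since each value fits in one $O(\log n)$-bit message, after $d$ rounds every value sits at its destination under $D'$.

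The step needing most care is ensuring that no communication is needed to agree on the schedule. $D$ and $D'$ are fixed and globally known; only the values are unknown. Therefore every machine can build the same graph $G$ locally, including a canonical labelling of the parallel edges, say by the index of the value they carry. It then runs a fixed deterministic colouring procedure, such as the standard alternating-path proof of König's theorem, with a fixed tie-breaking rule. All machines obtain the identical colouring $c$, and each machine reads off in which round it sends which value and to whom. Local computation is free in the model, so the total cost is exactly $d$ communication rounds.
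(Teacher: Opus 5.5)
Your proposal is correct and follows the same argument as the paper: build the sender--receiver bipartite multigraph, edge-colour it with $d$ colours (K\"onig), send one colour class per round, and compute the colouring locally and deterministically from the globally known $D$ and $D'$. The extra details you supply spell out points the paper leaves implicit: padding to a $d$-regular multigraph when machines hold fewer than $d$ values, handling values whose home and destination coincide, and using canonical tie-breaking so that all machines compute the same schedule.
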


%% file: ndn.tex
\section{\texorpdfstring{\boldmath\ndn{}}{⟨n,d,n⟩} Multiplication}\label{sec:n_d_n}

\subsection{Upper Bound for \texorpdfstring{\boldmath\ndn{}}{⟨n,d,n⟩}}\label{subsec:ndn_upper_bound}

\begin{theorem}\label{thm:ndn_upper_bound}
	In the low-bandwidth model with $n$ machines, 
	$\ndn{}$ multiplication over semirings can be computed in $\ndnup{}$ communication rounds.
\end{theorem}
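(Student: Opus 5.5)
The bound is a minimum of two terms, so I would prove each separately and then run whichever is smaller, given that $n$ and $d$ are known to all machines. The term $O(n+d^{2/3}n^{2/3})$ is exactly \Cref{thm:ndn_pastwork_upperbound}. It remains to give an $O(d\sqrt{n})$-round algorithm. Since $d\sqrt n \ge n$ whenever $d\ge\sqrt n$, this second algorithm only matters for $d\le\sqrt n$, but the construction below works for all $d$.

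For the $O(d\sqrt n)$ algorithm I would use the block partitioning described in the techniques overview. Assume first that $n$ is a perfect square; otherwise pad or round and lose only constant factors. Split the rows of $A$ into $\sqrt n$ groups $R_1,\dots,R_{\sqrt n}$ of $\sqrt n$ rows each, and the columns of $B$ into $\sqrt n$ groups $C_1,\dots,C_{\sqrt n}$. Machine $(i,j)$ is assigned the output block $X[R_i,C_j]=A[R_i,\cdot]\,B[\cdot,C_j]$, which has $n$ entries, matching the allowed output load. To compute this block locally, machine $(i,j)$ needs the $\sqrt n\cdot d$ entries of $A[R_i,\cdot]$ and the $d\sqrt n$ entries of $B[\cdot,C_j]$. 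Each row block of $A$ is needed by the $\sqrt n$ machines in row $i$ of the machine grid, and each column block of $B$ by the $\sqrt n$ machines in column $j$.

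The routing proceeds in three steps.
\begin{enumerate}
\item Using \Cref{obs:redistribution}, in $O(d)$ rounds redistribute the input so that machine $(i,j)$ holds a $1/\sqrt n$ share of $A[R_i,\cdot]$ and a $1/\sqrt n$ share of $B[\cdot,C_j]$. This is $d$ entries of each, so $O(d)$ per machine as required.
\item Within each grid row $i$, the $\sqrt n$ machines perform an all-gather of $A[R_i,\cdot]$, which has $d\sqrt n$ entries in total. Every machine must receive about $d\sqrt n$ values, and a simple schedule achieves this: in round $t$, machine $(i,j)$ sends its next piece to machine $(i,j+t \bmod \sqrt n)$. Cycling through all $\sqrt n$ shifts, each carrying $d$ values, takes $d\sqrt n$ rounds. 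In every round each machine sends and receives exactly one message, since the shift is a permutation within the row. Pipelined broadcast via \Cref{lem:broadcast} on groups of size $\sqrt n$ gives the same bound.
\item Do the same for $B[\cdot,C_j]$ within grid columns. Row and column phases can run sequentially, each costing $O(d\sqrt n)$.
\end{enumerate}
After these steps each machine computes its block locally.

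The main obstacle is the output distribution. The theorem must work for a natural prescribed output layout, and moving $n$ entries per machine would cost $\Theta(n)$ rounds, exceeding $d\sqrt n$ when $d<\sqrt n$. My plan is to declare the block layout as the designated output distribution, which the model permits since the assignment is fixed in advance. If a row-wise layout were required, I would instead argue that the block sizes can be adapted to it; I would verify this compatibility first. Finally, I would check the degenerate cases $d\ge\sqrt n$ and small $d$, including the $\log n$ terms from broadcast, which the shift schedule avoids, so that the total is $O(\min\{d\sqrt n, n+d^{2/3}n^{2/3}\})$.
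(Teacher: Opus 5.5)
Your proposal is correct and follows the paper's proof. You use the same assignment of the output blocks $A_iB_j$ to distinct machines under a block output layout, plus \Cref{thm:ndn_pastwork_upperbound} for the other term; your even redistribution followed by a cyclic-shift all-gather is an equivalent $O(d\sqrt n)$ replacement for the paper's gathering of each $A_i$ at $M_{i,1}$ followed by a pipelined broadcast. Drop the row-wise fallback, though: a machine that must output a whole row of $X$ has to learn an entire row of $B$ (take that row of $A$ to be a unit vector), which in general costs $\Omega(n)$ rounds and so far exceeds $d\sqrt n$ when $d\ll\sqrt n$, so declaring the block layout is essential rather than merely convenient.
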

\begin{proof}
	Let $A,B$ be given matrices, and we are computing $X=AB$. Let $A$ be an $n \times d$ matrix and $B$ a $d \times n$ matrix.
Assume that $n$ is a perfect square; otherwise, we safely pad the matrices with zeros.
We partition $X$ into $\sqrt{n} \times \sqrt{n}$ blocks and write
\[
X =
\begin{bmatrix}
	\chi_{1,1} & \chi_{1,2} & \cdots & \chi_{1,\sqrt{n}} \\
	\chi_{2,1} & \chi_{2,2} & \cdots & \chi_{2,\sqrt{n}} \\
	\vdots & \vdots & \ddots & \vdots \\
	\chi_{\sqrt{n},1} & \chi_{\sqrt{n},2} & \cdots & \chi_{\sqrt{n},\sqrt{n}}
\end{bmatrix}.
\]
Formally, for every \( i,j \in [\sqrt{n}] \),
\[
(\chi_{i,j})_{p,q}
=
X_{(i-1)\sqrt{n} + p,\; (j-1)\sqrt{n} + q}
\quad
\text{for all } p,q \in [\sqrt{n}].
\]
We also partition $A$ into $\sqrt{n}$ row blocks:
\[
A =
\begin{bmatrix}
	A_{1} \\
	A_{2} \\
	\vdots \\
	A_{\sqrt{n}}
\end{bmatrix}.
\]

Similarly, we partition $B$ into $\sqrt{n}$ column blocks:
\[
B =
\begin{bmatrix}
	B_{1} & B_{2} & \cdots & B_{\sqrt{n}}
\end{bmatrix}.
\]
Rewriting the multiplication, we have
\begin{align*}
	AB &=
	\begin{bmatrix}
		A_{1} \\
		A_{2} \\
		\vdots \\
		A_{\sqrt{n}}
	\end{bmatrix}
	\begin{bmatrix}
		B_{1} & B_{2} & \cdots & B_{\sqrt{n}}
	\end{bmatrix}
	\\ &=
	\begin{bmatrix}
		A_{1} B_{1} & A_{1} B_{2} & \cdots & A_{1} B_{\sqrt{n}} \\
		A_{2} B_{1} & A_{2} B_{2} & \cdots & A_{2} B_{\sqrt{n}} \\
		\vdots & \vdots & \ddots & \vdots \\
		A_{\sqrt{n}} B_{1} & A_{\sqrt{n}} B_{2} & \cdots & A_{\sqrt{n}} B_{\sqrt{n}}
	\end{bmatrix}
	=
	X,
\end{align*}
where for every \( i,j \in [\sqrt{n}] \),
$\chi_{i,j} = A_i B_j$.

For simplicity, we assume that each machine initially holds a row of $A$ and a column of $B$; otherwise, we can redistribute in $O(d)$ rounds by \Cref{obs:redistribution}. Moreover, for every pair $i,j \in [\sqrt{n}]$ we assign a distinct machine, denoted by $M_{i,j}$, responsible for outputting $\chi_{i,j}$.
Since $\chi_{i,j}= A_i B_j$, it is enough to route all elements of $A_i$ and $B_j$ to $M_{i,j}$, then compute $A_i B_j$ locally.
\subparagraph{Routing.} We first route all elements of $A$ to the right machines. For any $i \in [\sqrt{n}]$, $A_i$ needs to be sent to $\{M_{i,1}, M_{i,2}, \ldots, M_{i,\sqrt{n}} \}$. For all $A_i$'s, in $O(d\sqrt{n})$ rounds we can send $A_i$ from host machines to $M_{i,1}$. Then, we can use the power jump technique from \Cref{lem:broadcast} to broadcast $A_i$ to all the above machines in $O(d\sqrt{n}+\log n)=O(d\sqrt{n})$ rounds. Note that since no machine belongs to more than one broadcast instance, all instances can run in parallel. Similarly, we route and broadcast $B_j$ to all machines $\{M_{1,j}, M_{2,j}, \ldots, M_{\sqrt{n},j} \}$ in $O(d\sqrt{n})$ rounds.

Furthermore, \ndn{} can be solved by \Cref{thm:ndn_pastwork_upperbound} in
$
O(d^{2/3}n^{2/3}+n)
$
rounds. Combining the two algorithms, we conclude that \ndn{} can be solved in
$
\ndnup
$
rounds.
\end{proof}

Note that our $O(d\sqrt{n})$-round algorithm outperforms known algorithms up to $d=O(\sqn)$. 
We can similarly prove the following for multiplication over fields.

\begin{theorem}\label{thm:ndn_upper_bound_fields}
	In the low-bandwidth model with $n$ machines, 
	$\ndn{}$ multiplication over fields can be computed in $\ndnupfields{}$ communication rounds.
\end{theorem}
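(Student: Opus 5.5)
The bound is a minimum of three terms, so I would give three separate algorithms and run whichever is fastest for the given $n$ and $d$. Every algorithm that works over semirings also works over fields, since a field is in particular a semiring. Hence \Cref{thm:ndn_upper_bound} applies directly and already gives the first two terms, $O(d\sqrt{n})$ from the block-partitioning algorithm and $O(n+d^{2/3}n^{2/3})$ from \Cref{thm:ndn_pastwork_upperbound}. What remains is to show that \ndn{} over fields can be solved in $O(n^{2-2/\omega})$ rounds.

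For this third term I would reduce to the square case and use \Cref{thm:dense-mm-known}. Pad $A$ with $n-d$ zero columns to get an $n\times n$ matrix $A'$, and pad $B$ with $n-d$ zero rows to get an $n\times n$ matrix $B'$. Then $A'B' = AB = X$, because the padded columns of $A'$ only ever meet the padded rows of $B'$, so they contribute nothing. The padding entries are known constants, so each machine can create them locally without any communication. If the machine holding row $i$ of $A$ also pads it to length $n$, every machine holds $O(n)$ entries of $A'$ and of $B'$, which is a balanced distribution for the square instance.

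The one thing to check carefully is that the input and output distributions match what the square algorithm expects, within the round budget. The square algorithm assumes a specific balanced layout, for example machine $i$ holding row $i$ of $A'$ and row $i$ of $B'$. Our inputs have only $O(d)$ nonzero entries per machine, so by \Cref{obs:redistribution} we can move them into the required layout in $O(d)\le O(n)$ rounds; the zero entries are simply filled in locally. On the output side, \ndn{} requires each machine to hold $O(n)$ entries of $X$ in a fixed balanced layout, while the square algorithm produces rows of $X$. Both are $n$-entries-per-machine distributions, so \Cref{obs:redistribution} converts one into the other in $O(n)$ rounds.

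The total cost is therefore $O(n + n^{2-2/\omega}) = O(n^{2-2/\omega})$, using $\omega\le 3$, which gives $2-2/\omega \ge 4/3 > 1$. Running the best of the three algorithms, which is possible because every machine knows $n$ and $d$, gives the stated bound $\ndnupfields$.
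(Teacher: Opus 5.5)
Your proposal is correct and follows the same approach as the paper, which only says ``similarly''. The two semiring algorithms from \Cref{thm:ndn_upper_bound} carry over to fields unchanged. The $n^{2-2/\omega}$ term comes from zero-padding \ndn{} to a square instance, running the field algorithm of \Cref{thm:dense-mm-known}, and paying $O(n)$-round redistributions at the input and output ends.

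Two small fixes are needed:
\begin{itemize}
\item The step $O(n+n^{2-2/\omega})=O(n^{2-2/\omega})$ holds because $\omega\ge 2$ gives $2-2/\omega\ge 1$. Your stated reason is wrong: $\omega\le 3$ gives the opposite inequality $2-2/\omega\le 4/3$, and for the current $\omega$ the exponent is only about $1.157$.
\item If the square algorithm expects row $i$ of $B'$ at machine $i$, then machines $1,\dots,d$ must each receive $n$ entries. The input redistribution therefore costs $O(n)$ rounds, not $O(d)$. It needs K\H{o}nig's edge-coloring theorem for unequal loads rather than the equal-load form of \Cref{obs:redistribution}. This is still within the $O(n^{2-2/\omega})$ budget.
\end{itemize}
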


\subsection{Unconditional Lower Bound for \texorpdfstring{\boldmath\ndn{}}{⟨n,d,n⟩}} \label{subsec:n_d_n_lower_bound}

Suppose $AB = X$ is an \ndn{} instance. Let $Y$ be an $n \times n$ matrix which is the same as $X$, except it only contains those elements that machine $1$ will output, in the same position as they appear in $X$, and the other elements are $0$. Since each machine is responsible for outputting an even share of the product matrix, we know that $Y$ contains $n$ nonzero elements. Our idea to prove the lower bound works as follows. Suppose matrix $A$ contains arbitrary elements but we can set the elements of $B$ as $0$ and $1$ in such a way that after multiplying $A$ with $B$, $Y$ contains $k$ elements of $A$, in other words $y_{ij} = a_{ij}$ for some $k$ indices (where $y_{ij}$ and $a_{ij}$ are the elements of $Y$ and $A$). Since machine $1$ initially contains only $d$ elements of $A$, we will prove that any algorithm will take $\Omega(k-d)$ rounds to communicate the remaining $ \geq (k-d)$ elements of $A$ from other machines to machine $1$ (or we might use the opposite setting where elements of $A$ are set to $0$ and $1$ and matrix $B$ is any arbitrary matrix). We first prove the following lemma, which we will use to prove the desired lower bound.

Suppose Alice and Bob are two players such that Alice has a bit vector of length $b \log n$ that Bob wants to learn. Let $\mathcal{A}$ be an algorithm for this task in the low-bandwidth model.
\begin{lemma}\label{lemma:vec_comm}
Algorithm $\mathcal{A}$ requires at least $b$ rounds.
\end{lemma}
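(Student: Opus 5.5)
The plan is an information-theoretic counting argument. We interpret Alice and Bob as two distinct machines of the low-bandwidth model. Alice holds a uniformly random string $s \in \{0,1\}^{b\log n}$, Bob initially knows nothing about $s$, and all other machines start with inputs independent of $s$. We may assume the algorithm is deterministic; a randomized one can be fixed to its best random string. After $T$ rounds Bob must determine $s$, so his final view must take at least $2^{b\log n}$ distinct values as $s$ ranges over all strings.

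The central step is to bound how much Bob's view can depend on $s$. In each round Bob receives at most one message of $O(\log n)$ bits, more precisely at most $c\log n$ bits for the model's constant $c$. The algorithm may also use the \emph{absence} of a message, or the identity of the sender, as signalling; the paper explicitly warns about communication by silence. So I would count Bob's per-round observation as an element of a set of size at most $(n+1)\cdot 2^{c\log n} = n^{O(1)}$: either ``nothing received'' or a pair (sender, message).

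Bob's entire view is therefore determined by a transcript of $T$ such observations. This gives at most $n^{O(T)}$ distinct views. Correctness requires $n^{O(T)} \ge 2^{b\log n} = n^{b}$, hence $T = \Omega(b)$. To obtain ``at least $b$ rounds'' literally, I would normalize the message size so that one message carries $\log n$ bits, as the statement implicitly does. The constant from the sender identity and silence is then absorbed: the claim is read as $\Omega(b)$ rounds, and I would note that the paper only ever uses it asymptotically.

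The main obstacle is justifying that the other $n-2$ machines cannot increase Bob's information beyond his own receive capacity. I would handle this by a direct argument rather than a cut argument. Fix all inputs other than $s$. Then every message Bob receives, and its timing, is a deterministic function of $s$ mediated only through Bob's incoming channel. Bob's final output is a function of his incoming transcript alone, since his own initial state is independent of $s$. So the number of distinct outputs is bounded by the number of distinct incoming transcripts, which completes the count. Relay machines can reorder or encode information, but they cannot widen Bob's single incoming slot per round.
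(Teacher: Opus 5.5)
Your argument is correct and is essentially the paper's proof: a pigeonhole count showing that with too few rounds, two distinct strings produce the same incoming transcript at Bob, so he cannot tell them apart. Your per-round alphabet of size $(n+1)\cdot 2^{c\log n}$ accounts for silence and sender identity, which makes it more careful than the paper's count of only $\log n$ received bits per round. It also correctly shows that what holds in general is $\Omega(b)$ rather than literally $b$ (your count gives roughly $b/2$ when messages are exactly $\log n$ bits), and $\Omega(b)$ is all the paper ever uses.
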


\begin{proof}
There are $2^{b\log n}$ bit vectors of length $b\log n$; enumerate them as $v_1,\dots,v_{2^{b\log n}}$. Assume that $\mathcal{A}$ completes the communication in $r<b$ rounds. Since in each round Bob can receive at most $\log n$ bits, he receives at most $r \log n$ bits in total. Because $r \log n < b \log n$, there must exist two distinct vectors $v_i$ and $v_j$ such that Bob receives the same sequence of bits when 
Alice holds either $v_i$ or $v_j$. Consequently, Bob cannot distinguish between these two cases and therefore outputs an incorrect vector in at least one case. Hence, $\mathcal{A}$ must take at least $b$ rounds.
\end{proof}

Returning to our lower bound proof, we divide the rows of $Y$ into two parts, heavy and light: the heavy part contains those rows that have at least $\sqrt{n}$ elements, and the light part contains those rows that have fewer than $\sqrt{n}$ elements. Without loss of generality, the rows in the heavy part have higher indices (top rows) than the rows in the light part (since we can swap rows of $A$ in a way such that the condition holds. In fact, we can assume that the rows are sorted, from top to bottom, according to the number of elements in them). 
Now divide the remaining analysis into two cases.

\subparagraph{Case 1: Heavy rows contain at least $\frac{n}{2}$ elements.} If the number of heavy rows is $\geq d$, then the first $d$ top rows (heavy rows with high indices) collectively contain at least $d\sqrt {n}$ elements. Also, if the number of heavy rows is less than $d$, then we know that the first $d$ top rows contain $\Omega(n)$ elements. Now, suppose that we construct a matrix $B$ in such a way that it contains some arbitrary elements in the same positions as in the first $d$ top  rows of $Y$. Matrix $A$ contains the identity matrix in its first $d$ top rows and $d$ columns, and the remaining elements are $0$. Now, in order to do $A \times B$, these $\min \{d\sqrt {n},n\}$ elements of $B$ need to be communicated to machine $1$. We treat machine $1$ as Bob and all remaining machines together as Alice. Alice initially holds $\min \{d\sqrt {n},n\}$ elements of $B$, corresponding to a bit vector of length 
$(\min \{d\sqrt {n},n\}) \log n$ that Bob must learn. By \cref{lemma:vec_comm}, any algorithm requires \ndnlowuncond{} rounds for the multiplication of $A$ and $B$ in this case.

	\subparagraph{Case 2: Light rows contain more than $\frac{n}{2}$ elements.} Since $d \leq n$, we can say that there are at least $\Omega(\sqn)$ light rows. Mark the first $d$ elements from each light row of $Y$ (if the number of elements in a row is less than $d$, then select all of them). We can prove that the number of marked elements is $\Omega(d\sqn)$. The proof goes as follows. First, let us sort the rows according to the number of elements present in them. Look at the top $\sqn /2$ light rows. (i) If the last row among these $\sqn/2$ rows contains at least $d$ elements, then the marked elements in the top $\sqn/2$ rows alone give us $\Omega(d \sqn)$ elements. (ii) If the last row among these $\sqn/2$ rows does not contain $d$ elements, that means all the remaining light rows also contain fewer than $d$ elements. Now, let us count the maximum number of unmarked elements in this case. All the unmarked elements can only be present in the top $\sqn/2$ rows, therefore they are at most $\frac{\sqn}{2} (\sqn -d)$. By subtracting that from the total elements of the light rows, we get that the total number of marked elements is $\Omega(d\sqn)$. 

Now that we have marked $\Omega(d \sqn)$ elements in matrix $Y$, we will show that for any arbitrary matrix $A$, one can carefully set the elements of $B$ as $0$ and $1$ such that after multiplication, $\Omega(d \sqn)$ elements of $A$ appear at these marked positions of $Y$. Note that if each column of $B$ contains exactly one $1$ and all the other elements are $0$, then each column of $X$ will be nothing but a column of $A$. Now the idea is to set elements of $B$ in such a way that each column contains a single $1$ and after multiplication $\Omega(d\sqn)$ distinct elements of $A$ land at the marked positions of $Y$. Let us suppose that the $i$-th row of $Y$ contains $t$ marked elements, namely $y_{i,j_1}, y_{i,j_2}, \ldots, y_{i,j_t}$. Also, suppose in matrix $B$ columns $j_1,j_2, \ldots, j_t$ contain $1$ in rows $k_1,k_2, \ldots, k_t$, respectively (and of course all the other elements in these columns are $0$). In such a setting, after multiplication we will have that $y_{i,j_l} = a_{i,k_l}$ for all $l \in [t]$. Notice that if all $k_l$ are distinct, then $a_{i,k_l}$ is also distinct. In other words, we can say that if there are $p$ distinct $k_l$ among $k_1,k_2, \ldots, k_t$, then $p$ distinct elements from row $i$ of $A$ will land on the marked elements of matrix $Y$. For each row $i$ of matrix $Y$, we define a number $f_i$ as follows. If $y_{i,j_1}, y_{i,j_2}, \ldots, y_{i,j_t}$ are the marked elements of row $i$ of $X$, then $f_i$ is the maximum number of columns among $j_1,j_2, \ldots, j_t$ in matrix $B$ such that no two columns among these $f_i$ columns contain $1$ in the same row. Now if we can prove that we can set the matrix $B$ in such a way that $\Sigma f_i = \Omega(d \sqn)$ then we are done.

We can convert this problem into a graph coloring problem as follows. Let us look at the matrix $Y$ (only marked elements) as a bipartite graph $G=(L,R)$ with $n$ nodes on each set $L$ and $R$. Let us index the nodes on each side as $1,2,\ldots,n$. Put an edge $(i,j)$ if $y_{i,j}$ is marked. In other words, edges incident on node $i \in L$ represent the marked elements of row $i$ in $Y$ and edges incident on node $j \in R$ represent the marked elements in column $j$ of $Y$.

\begin{lemma}[Distinct-Neighbor Coloring] \label{distinct-coloring}
	Let $G=(L,R,E)$ be a bipartite graph with $|L|=|R|=n$.
	Assume that every vertex $v \in L$ has degree at most $d\leq n$.
	Then there exists a coloring $c : R \to [d]$ such that
	\[
	\sum_{v \in L} |c(N(v))| \ge \frac{|E|}{2}.
	\]
\end{lemma}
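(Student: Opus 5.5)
Color each vertex of $R$ independently and uniformly at random from $[d]$, and show that the expected value of $\sum_{v\in L}|c(N(v))|$ is at least $|E|/2$. Some coloring then attains at least the expectation, which proves the lemma. By linearity of expectation it suffices to show, for each fixed $v\in L$ with degree $t=|N(v)|\le d$, that
\[
\mathbb{E}\bigl[|c(N(v))|\bigr]\ge \frac{t}{2}.
\]
Summing over $v\in L$ then gives $\sum_v t_v/2=|E|/2$.

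For a fixed $v$ of degree $t$, the number of distinct colors used on $N(v)$ equals the number of colors $k\in[d]$ that appear at least once. Hence
\[
\mathbb{E}\bigl[|c(N(v))|\bigr]=d\Bigl(1-\bigl(1-\tfrac1d\bigr)^{t}\Bigr).
\]
I would lower-bound this with the elementary inequality $(1-1/d)^t\le e^{-t/d}$ together with $1-e^{-x}\ge (1-e^{-1})x$ for $x\in[0,1]$, which holds by concavity of $1-e^{-x}$. With $x=t/d\le 1$ this gives $\mathbb{E}[|c(N(v))|]\ge (1-e^{-1})t\ge 0.63\,t\ge t/2$.

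An alternative, more combinatorial route avoids exponentials. Count the neighbors $u\in N(v)$ whose color is not shared by any earlier neighbor in a fixed order. Neighbor $u_\ell$ is ``new'' with probability at least $1-(\ell-1)/d$, since at most $\ell-1$ colors have been used so far. Summing over $\ell=1,\dots,t$ gives expected value at least $t-\frac{t(t-1)}{2d}\ge t-\frac{t}{2}=\frac t2$, using $t\le d$.

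The only real subtlety, and the step I would double-check, is that this bound requires $\deg(v)\le d$. That is exactly the hypothesis, and in the application it is ensured because only the first $d$ elements of each light row were marked. The step is otherwise routine. A deterministic coloring can be obtained by the method of conditional expectations if needed, though the existence statement is all the subsequent argument uses: color $k$ corresponds to the row of $B$ holding the single $1$ in each column.
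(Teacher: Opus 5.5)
Your proposal is correct and is essentially the paper's proof: a uniformly random coloring of $R$, linearity of expectation, the formula $\mathbb{E}[|c(N(v))|]=d(1-(1-1/d)^t)$, and the bound $d(1-(1-1/d)^t)\ge t/2$ for $t\le d$. The only difference is that the paper asserts this last inequality without proof, while you justify it in two valid ways: via $e^{-t/d}$ with concavity, and via the first-occurrence counting argument.
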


\begin{proof}
	Let $N(v)$ be the neighbors of $v$. Assign to each $j \in R$ a color $c(j) \in [d]$
	independently and uniformly at random.
	
	Fix a vertex $i \in L$ with degree $t \le d$.
	For a fixed color $z \in [d]$, the probability that
	none of the $t$ neighbors of $i$ receives color $z$ is
	$(1 - 1/d)^t$.
	Hence the probability that color $z$ appears among the
	neighbors of $i$ is
	$1 - (1 - 1/d)^t$.
	By linearity of expectation,
	\[ \mathbb{E}[|c(N(i))|] = d(1 - (1 - 1/d)^t).\]
	Using the inequality
	\[ d (1 - (1 - 1/d)^t) \ge \frac{t}{2} \quad \text{for } t \le d,\]
	we obtain
	\[\mathbb{E}[|c(N(i))|] \ge \frac{t}{2}.\]
	Summing over all $i \in L$ and using linearity of expectation,
	\[
	\mathbb{E}\left[\sum_{i \in L} |c(N(i))|\right] \ge \sum_{i \in L} \frac{\deg(i)}{2}=\frac{|E|}{2}.
	\]
	Therefore there exists a coloring $c$
	such that
	\[
	\sum_{i \in L} |c(N(i))| \ge \frac{|E|}{2}. \qedhere
	\]
\end{proof}

Now this lemma gives us the combinatorial gadget that enables us to force a lot of distinct entries of $A$ to appear in the positions that machine $1$ is responsible for outputting. We now construct the matrix $B$ as follows: each column $j$ contains a single $1$ in row $c(j)$, and all other entries are $0$.
Then for all $(i,j)\in E$,
\[ x_{i,j} = \sum_{t=1}^d a_{i,t} b_{t,j} = a_{i,c(j)}.
\]

Therefore, for each row $i$ the number of distinct entries of $A$ that appear in the marked positions of row $i$ is exactly $|c(N(i))|$.
Summing over all rows, the total number of distinct entries of $A$ that appear in the marked positions is at least
\[ \sum_{i\in L} |c(N(i))| = \Omega(d\sqrt{n}).
\]

Machine $1$ must output all $\Omega(d \sqn)$ distinct entries corresponding to $Y$; by \Cref{lemma:vec_comm}, this takes $\Omega(d \sqn)$ rounds. On the other hand, by definition, machine $1$ does not output more than $n$ elements. Thus, the \ndnlowuncond{} bound follows.

This proves the following theorem.

\begin{theorem} \label{thm:ndn_unconditional_lowerbound}
	In the low-bandwidth model with $n$ machines, any algorithm that computes the product \ndn{} over fields or semirings requires $\ndnlowuncond$ communication rounds.
\end{theorem}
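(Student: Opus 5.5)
The plan is to fix machine $1$, let $Y$ be its $n$ assigned output positions, and build an instance in which $Y$ is forced to contain $m=\Omega(\min\{d\sqrt n,n\})$ \emph{distinct} entries of one input matrix. All other input entries are fixed to $0$ or $1$. Machine $1$ holds only $O(d)$ input entries, so it must receive the other $m-O(d)$ entries, each an arbitrary value of $\Theta(\log n)$ bits. Treating machine $1$ as Bob and all other machines as Alice, \cref{lemma:vec_comm} then gives $\Omega(m-d)$ rounds. This is $\Omega(\min\{d\sqrt n,n\})$ whenever $m$ exceeds $d$ by a constant factor, and in the remaining range the bound is trivially $\Omega(d)$. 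The lower bound holds already over $\{0,1\}$-valued structure, so it covers both fields and semirings.

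First I would classify the rows of $Y$ by how many output positions machine $1$ owns in them. A row is \emph{heavy} if it has at least $\sqrt n$ such positions and \emph{light} otherwise. Since $Y$ has $n$ positions, either the heavy rows or the light rows carry at least $n/2$ of them.

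\emph{Heavy case.} Let $S$ be the set of $\min\{d,\#\text{heavy rows}\}$ heaviest rows. If there are at least $d$ heavy rows, $S$ covers at least $d\sqrt n$ positions. If there are fewer than $d$, then $S$ is all heavy rows and covers at least $n/2$ positions. I then make $A$ place $\{0,1\}$ identity structure on the rows in $S$, mapping each to a distinct row of $B$, and zero elsewhere. Then $X$ restricted to $S$ copies $\min\{d,|S|\}$ rows of $B$ verbatim, so every position of $Y$ in $S$ is a distinct entry of $B$.

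\emph{Light case.} In each light row I mark up to $d$ of its positions. A sorting argument on row sizes shows $\Omega(d\sqrt n)$ positions get marked. Either the top $\sqrt n/2$ light rows each have at least $d$ positions, or every light row has fewer than $d$ and only a few positions go unmarked, leaving $\Omega(d\sqrt n)$ marked.

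The remaining step is to choose a $\{0,1\}$ matrix $B$ with one $1$ per column, so that $x_{ij}=a_{i,c(j)}$ for a map $c:[n]\to[d]$, while many distinct $a$-entries land on the marked positions. Form the bipartite graph of marked positions, whose left degrees are at most $d$, and apply \cref{distinct-coloring}. It yields a $c$ with $\sum_i |c(N(i))|\ge |E|/2=\Omega(d\sqrt n)$ distinct entries of $A$ in machine $1$'s output.

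I expect the main obstacle to be the counting in the light case. The key is guaranteeing $\Omega(d\sqrt n)$ marked positions for an \emph{arbitrary} balanced output assignment, including the interplay when $d>\sqrt n$, where light rows have fewer than $d$ positions and the bound degrades to $\Omega(n)$. This is handled by taking the minimum with $n$.
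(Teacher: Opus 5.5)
Your proposal follows the paper's proof essentially step for step. It uses the same heavy/light split of machine~$1$'s output rows, the same identity-in-$A$ gadget copying rows of $B$ in the heavy case, the same marking of up to $d$ positions per light row followed by \cref{distinct-coloring} to force $x_{ij}=a_{i,c(j)}$, and the same Alice/Bob counting via \cref{lemma:vec_comm}; you are, if anything, more careful than the paper about the $d>\sqrt n$ light case and about machine~$1$'s own $O(d)$ input entries.

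The one loose end is your claim that the leftover range $d=\Theta(n)$, where $m$ need not exceed the $c\,d$ entries machine~$1$ may hold, is ``trivially $\Omega(d)$''. That is not automatic, but you can close it by choosing the injection (heavy case) and the coloring (light case) uniformly at random. At most $\min\{d,\sqrt n\}$ rows are forced in the heavy case, and each light row has fewer than $\sqrt n$ marked positions. So in expectation only $O(c\min\{d,\sqrt n\})$ forced entries are already held by machine~$1$, leaving $\Omega(\min\{d\sqrt n,n\})$ that it must receive.
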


 In particular, we get the following corollary.
 \begin{corollary}\label{cor:nnn_unconditional_lowerbound}
 	In the low-bandwidth model with $n$ machines, any algorithm that computes the product \nnn over fields or semirings requires $\Omega(n)$ communication rounds.
 \end{corollary}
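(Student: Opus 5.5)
The plan is to specialize \Cref{thm:ndn_unconditional_lowerbound} to $d=n$. An \nnn{} instance is exactly an \ndn{} instance with $d=n$, and the input and output distributions match the balanced ones assumed there. Each machine holds $O(d)=O(n)$ input entries and is responsible for $O(n)$ output entries. The theorem then gives a lower bound of $\Omega(\min\{d\sqrt{n}, n\}) = \Omega(\min\{n^{3/2}, n\}) = \Omega(n)$ rounds.

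The only point to check is that the proof of \Cref{thm:ndn_unconditional_lowerbound} uses nothing that fails at the boundary $d=n$. The heavy/light split treats $d\le n$ uniformly. In Case 1, the first $d$ top rows of $Y$ contain $\Omega(n)$ marked positions, because every row of $Y$ has at most $n$ elements and the heavy rows contain at least $n/2$ of them. In Case 2, \Cref{distinct-coloring} with $d=n$ yields $\Omega(\min\{d\sqrt n,n\})$ distinct entries that must reach machine $1$. Either way, \Cref{lemma:vec_comm} gives $\Omega(n)$ rounds.

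I expect essentially no obstacle. If one prefers a self-contained argument, there is a direct route. Fix the output set of machine $1$, which consists of $\Theta(n)$ positions of $X$, and take $A$ to be a $0/1$ matrix that places $\Omega(n)$ distinct entries of an arbitrary $B$ into those positions. For instance, $A$ can be a permutation-like matrix chosen via \Cref{distinct-coloring}, with the roles of $A$ and $B$ swapped. Machine $1$ initially holds only $O(n)$ entries, but a constant fraction of the required entries can be forced to come from other machines by choosing which rows of $B$ are routed. \Cref{lemma:vec_comm} then yields $\Omega(n)$ rounds. The only mildly delicate step is ensuring that the forced entries are not already held by machine $1$. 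Restricting to the entries outside machine $1$'s $O(n)$ inputs, as the theorem's proof implicitly does, loses only a constant factor, provided the constant in the output share is large enough relative to the input share.
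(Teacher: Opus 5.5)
You follow the paper's route: it also obtains the corollary simply as the case $d=n$ of \Cref{thm:ndn_unconditional_lowerbound}. But your check that nothing breaks at $d=n$ is wrong.

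The argument really yields $\Omega(k-h)$ rounds. Here $k\le n$ is the number of distinct free entries forced into machine~$1$'s output, and $h$ is the number of those that machine~$1$ already holds. For $d=o(n)$ one has $h=O(d)=o(k)$. At $d=n$, however, machine~$1$ may hold $\Theta(n)$ input entries, as many as it outputs, and neither case of the proof steers the forced entries away from them.

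Concretely, let machine~$i$ hold row~$i$ of $A$ and of $B$ and output row~$i$ of $X$. Then $Y$ is row~$1$ of $X$, which is heavy, so Case~1 applies. Case~1 makes $A$ the identity, so $X=B$ and machine~$1$'s output is its own input row; the argument then gives no bound at all. Your proviso on the constants cannot be assumed, since the bound must hold for every balanced distribution, and here machine~$1$ outputs $n$ entries while holding $2n$.

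Your direct route has a further problem: it always keeps $B$ free. Suppose machine~$i$ holds column~$i$ of $A$ and of $B$ and outputs column~$i$ of $X$. Then for every fixed $A$ the output $X_{*,1}=AB_{*,1}$ is computable from machine~$1$'s own input, whichever rows you route.

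The missing idea is to choose which matrix is free, and the $0/1$ matrix, adaptively to what machine~$1$ holds. Say machine~$1$ holds at most $cn$ input entries and outputs a set $S$ of $n$ positions. Call row~$i$ $A$-heavy if machine~$1$ holds at least $n/2$ entries of row~$i$ of $A$. Call column~$j$ $B$-heavy if it holds at least $n/2$ entries of column~$j$ of $B$. There are at most $2c$ of each, so at most $4c^2$ positions of $S$ lie in both an $A$-heavy row and a $B$-heavy column. Hence either at least $(n-4c^2)/2$ positions of $S$ lie in non-$A$-heavy rows, or at least that many lie in non-$B$-heavy columns.

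In the first case, keep $A$ free and let $B$ be a uniformly random permutation matrix, $b_{\sigma(j),j}=1$, so $x_{i,j}=a_{i,\sigma(j)}$. Distinct positions receive distinct entries of $A$. A position in a non-$A$-heavy row receives an entry held by machine~$1$ with probability less than $1/2$. So some $\sigma$ places at least $(n-4c^2)/4$ distinct entries that machine~$1$ does not hold verbatim into its output. The second case is symmetric, with $a_{i,\pi(i)}=1$ and $B$ free. Treat these entries as Alice's input and all other entries as constants; then \Cref{lemma:vec_comm} gives the $\Omega(n)$ bound for every balanced distribution.
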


\subsection{Conditional Lower Bound for \texorpdfstring{\boldmath\ndn{}}{⟨n,d,n⟩}}
In this section, we will prove $\Omega\!\left(\frac{d^{14/9}}{n^{2/9}}\right)$ for \ndn{} assuming that $\Omega(n^{\frac{4}{3}})$ is the lower bound for \nnn{}. We begin by proving that if we have an algorithm to solve \ndn{} instances, we can use it to solve \kkk{} instances, with no asymptotic overhead, where $k = (nd^2)^{\frac{1}{3}}$.

\begin{lemma}
	\label{lemma:rect-to-square}
	If \ndn{} can be solved in $O(n^\beta)$ rounds, then we can solve \kkk{} in $O(n^\beta)$ rounds as well.
\end{lemma}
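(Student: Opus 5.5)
Write $k=(nd^2)^{1/3}$, so that $k^3=nd^2$ and $d\le k\le n$. The \kkk{} instance $C=PQ$ lives on the same $n$ machines, and each machine holds $O(k^2/n)$ entries of the input and of the output. The plan is a block decomposition: cut the \kkk{} product into pieces, each shaped like a small rectangular product, and pack these pieces into a single \ndn{} instance that the assumed $O(n^\beta)$-round algorithm then solves.

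Concretely, split $P$ into $k/d$ column blocks $P^{(1)},\dots,P^{(k/d)}$ of size $k\times d$, and $Q$ into $k/d$ row blocks $Q^{(1)},\dots,Q^{(k/d)}$ of size $d\times k$. Then
\[
C=\sum_{\ell=1}^{k/d} P^{(\ell)}Q^{(\ell)},
\]
and each term is a $\langle k,d,k\rangle$ product. Because of the ordering of the summation, these cannot be laid out side by side in one big $\langle n,d,n\rangle$ product directly. Instead, build an $n\times d$ matrix $A$ by stacking $P^{(1)},\dots,P^{(k/d)}$ vertically, which uses $k\cdot k/d=k^2/d$ rows. Build a $d\times n$ matrix $B$ by placing $Q^{(1)},\dots,Q^{(k/d)}$ side by side, which uses $k^2/d$ columns. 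Pad both with zeros up to dimension $n$; this is possible because $k^2/d=n^{2/3}d^{1/3}\le n$ when $d\le n$. The $(\ell,\ell)$ diagonal block of $AB$ is exactly $P^{(\ell)}Q^{(\ell)}$. The off-diagonal blocks are wasted work, but we pay only for rounds. The input $A,B$ has $2nd$ entries, which is at least the number of nonzeros, $2k^2$. So getting the \kkk{} input into the balanced input distribution of the \ndn{} instance is a redistribution in which each machine sends $O(k^2/n)$ values and receives $O(d)$ values. By \cref{obs:redistribution}, padded to a uniform load, this costs $O(d+k^2/n)=O(d)$ rounds.

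Running the \ndn{} algorithm then takes $O(n^\beta)$ rounds. Afterwards, the diagonal block entries $(P^{(\ell)}Q^{(\ell)})_{p,q}$ sit at their fixed, known output positions, spread over the machines according to the \ndn{} output distribution. Each output entry $C_{p,q}$ is the sum of its $k/d$ copies across the diagonal blocks. We choose a fixed schedule that routes each copy to the machine responsible for $C_{p,q}$, adding partial sums along the way.

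The hard step is bounding the cost of this final aggregation, since a machine may hold up to $n$ output entries of the \ndn{} instance. Only the $k^2$ diagonal-block entries matter, out of $n^2$ in total, but an adversarial output distribution could place many of them on a single machine. I would handle this in two ways. First, the freedom to choose which rows of $A$ and columns of $B$ carry the blocks: permuting rows and columns lets us spread the relevant entries so that each machine holds $O(k^2/n)$ of them. This is the main obstacle, and I would try a random or greedy permutation argument. Second, I would use the fact that $O(n)$ rounds are always affordable, because $n^\beta=\Omega(n)$ by \cref{thm:ndn_unconditional_lowerbound} whenever $d\ge\sqrt n$, which is the regime where the lemma is applied. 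With the load balanced, the sum-reduction is a redistribution of $O(k^2/n)$ values per machine, with additions done locally. This costs $O(k^2/n+n)$ rounds, and $k^2/n\le n$, so the total is $O(d+n^\beta+n)=O(n^\beta)$.
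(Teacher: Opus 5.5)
Your layout differs from the paper's. The paper cuts the $k\times k$ matrices into a $t\times t$ grid of $d\times d$ blocks, with $t=(n/d)^{1/3}=k/d$. It lists all $t^3=n/d$ block products and stacks them as the $d\times d$ diagonal blocks of one \ndn{} instance that uses all $n$ rows, so each input block appears $t$ times. You instead place the $k/d$ products $P^{(\ell)}Q^{(\ell)}$ as $k\times k$ diagonal blocks. This needs no replication and only $n^{2/3}d^{1/3}$ rows. Both layouts are valid, and your $O(d)$-round input redistribution is correct. The paper's proof stops once the diagonal blocks are computed. It never discusses moving the input or summing the $k/d$ partial products into $C$, so you are right that the reassembly is where the work lies.

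The reassembly is also where your argument has a gap, for $d<\sqrt n$.

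\emph{The balancing step.} Your count is off: the diagonal blocks hold $(k/d)\cdot k^2=k^3/d=nd$ entries, not $k^2$. So no permutation can leave only $O(k^2/n)$ of them per machine, because $k^2/n=d^{4/3}n^{-1/3}<d$, which is below the average. The right target is $O(d)$. You only announce, and do not give, the permutation argument that some placement reaches this under an arbitrary fixed output distribution.

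\emph{What does work.} Your second observation is sound and makes balancing unnecessary. Each machine holds $O(n)$ entries of the \ndn{} output. Each machine responsible for part of $C$ receives $O(k^2/n)\cdot k/d=O(d)$ copies. So the summation is an $O(n)$-round redistribution whatever the placement. When $d\ge\sqrt n$, this is absorbed because $n^\beta=\Omega(n)$ by \cref{thm:ndn_unconditional_lowerbound}.

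\emph{The uncovered range.} The lemma is claimed for every $d$. For $d<\sqrt n$, that theorem only gives $n^\beta=\Omega(d\sqrt n)$, so an $O(n)$ step is not free, and your proof does not cover this range. That this range is vacuous for \cref{thm:ndn_conditional_lowerbound} is a fact about the application, not about the lemma.

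\emph{An easy repair.} For $d\le n^{5/8}$, solve \kkk{} directly:
\begin{enumerate}
\item gather the input onto $k$ machines in $O(k)$ rounds;
\item apply \cref{thm:dense-mm-known} in $O(k^{4/3})$ rounds;
\item scatter the output in $O(k)$ rounds.
\end{enumerate}
In this range $k\le k^{4/3}=n^{4/9}d^{8/9}\le\min\{d\sqrt n,n\}$, so this costs $O(n^\beta)$. Together with your argument for $d\ge\sqrt n$, this covers every $d$.
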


\begin{proof}
	Let $A\times B = C$ be an instance where $A, B,$ and $C$ are of size $k\times k$. We can divide the task of multiplying $A$ and $B$ into multiplying $(\frac{n}{d})$ instances of \ddd{}. Let us first divide the matrices $A$ and $B$ into $(\frac{n}{d})^{\frac{2}{3}}$ submatrices of size $d \times d$ as follows. Let $t = (\frac{n}{d})^{\frac{1}{3}}$.
   
    \[
    \begin{bmatrix}
	A_1 & A_2 & \cdots &  A_t\\
    A_{t+1} & A_{t+2} & \cdots & A_{2t}\\
   \vdots & \vdots & \ddots & \vdots \\
    A_{t^2-t+1} & A_{t^2-t+2} & \cdots &A_{t^2}\\
	\end{bmatrix} \times 
    \begin{bmatrix}
	B_1 & B_2 & \cdots &  B_t\\
    B_{t+1} & B_{t+2} & \cdots & B_{2t}\\
   \vdots & \vdots & \ddots & \vdots \\
    B_{t^2-t+1} & B_{t^2-t+2} & \cdots & B_{t^2}\\
	\end{bmatrix}.
    \]
     From this, we conclude that in order to compute $A \times B$, we need to compute $t^3 = \frac{n}{d}$ instances of type $A_i \times B_j$ for some combination of $i,j \in [t^2]$. Let us rename these $t^3$ instances as $A_i' \times B_i'$. Now we prove that if \ndn{} can be solved in $O(n^{\beta})$ rounds, then these $t^3$ instances can also be computed in $O(n^\beta)$ rounds. We create an \ndn{} instance by arranging the matrices
     $A_i'$ vertically and the matrices $B_i'$ horizontally as follows:
     \[
	\begin{bmatrix}
		A_1' \\
		A_2' \\
		\vdots \\
		A_{t^3}'
	\end{bmatrix} \times 
     [\, B_1' \quad B_2' \cdots B_{t^3}' \,] =
     \begin{bmatrix}
	C_1' & \cdots & \cdots &  \cdots \\
     & C_{2}' & \cdots & \cdots \\
   \vdots & \vdots & \ddots & \vdots \\
	    \cdots & \cdots & \cdots &C_{t^3}'\\
		\end{bmatrix}
		\]
	    Let us denote these larger matrices by $A', B',$ and $C'$. Let \[C_1',C_2', \cdots, C_{t^3}'\] be the $d \times d$ matrices that appear on the diagonal of the resultant matrix $C'$. We can see that the matrices $C_i'$ are the desired $t^3$ matrices that we wanted, which, by our assumption, can be computed in $O(n^\beta)$ rounds. This implies that $C$ can also be computed in $O(n^\beta)$ rounds.
\end{proof}

Let us assume that $d = n^\alpha$ for some $0 \leq \alpha \leq 1$. From the above, we conclude that $\langle n^{\frac{1 +2 \alpha}{3}}, n^{\frac{1 +2 \alpha}{3}}, n^{\frac{1 +2 \alpha}{3}}\rangle$ can be solved in $O(n^\beta)$ rounds
	 with $n$ machines. Let us denote $\gamma ={(\frac{1 +2 \alpha}{3})}$ and $N = n^\gamma $. That implies that \ensuremath{\langle N, N, N\rangle} can be solved in $O(N^{\frac{\beta}{\gamma}})$ rounds with $N^{\frac{1}{\gamma}}$ machines. Since $\gamma \leq 1$, we have $N^{\frac{1}{\gamma}} \geq N$. Now suppose that instead of $N^{\frac{1}{\gamma}}$ machines, we use $N$ machines to run the same algorithm. In such a scenario, one machine is responsible for handling the communication of $N^{\frac{1}{\gamma}-1}$ machines. This means that a task that could be done in one round with $N^{\frac{1}{\gamma}}$ machines will now require $N^{\frac{1}{\gamma}-1}$ rounds. Therefore, we can say that \ensuremath{\langle N, N, N\rangle} can be solved in $O(N^{\frac{\beta}{\gamma}+\frac{1}{\gamma}-1})$ rounds with $N$ machines. Now if we assume that \ensuremath{\langle n, n, n\rangle} requires $\Omega(n^{\frac{4}{3}})$ rounds with $n$ machines in the low bandwidth model, from the above we can conclude that $\frac{\beta + 1 - \gamma}{\gamma} \geq \frac{4}{3}$. This implies \[\beta \geq \frac{7 \gamma}{3} - 1 = \frac{7(2\alpha + 1)}{9} - 1 = \frac{14 \alpha - 2}{9}.\] Since $d = n^\alpha$, we can conclude that \ndn{} requires $\Omega\!\left(\frac{d^{14/9}}{n^{2/9}}\right)$ rounds.

\begin{theorem} \label{thm:ndn_conditional_lowerbound}
	Assume that we can solve \ndn{} over semirings in $O(\frac{d^{14/9}}{n^{2/9+\epsilon}})$ rounds in the low-bandwidth model with $n$ machines. Then we can also solve dense matrix multiplication in $O(n^{4/3-\epsilon/2})$ rounds.
\end{theorem}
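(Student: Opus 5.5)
The plan is to run the two-step reduction already sketched before the statement, now keeping track of the $\epsilon$ in the exponents. Write $d=n^\alpha$, so the assumed running time for \ndn{} is $O(n^{\beta})$ with $\beta=\frac{14\alpha-2}{9}-\epsilon$. Set $k=(nd^2)^{1/3}=n^{\gamma}$ with $\gamma=\frac{1+2\alpha}{3}$.

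\textbf{Step 1 (rectangular to square, same machine count).} Apply \Cref{lemma:rect-to-square}. Split the $k\times k$ instance into $t^3=n/d$ products of $d\times d$ blocks, stack them into a single \ndn{} instance, and read the $C_i'$ off the diagonal blocks. This solves \kkk{} with $n$ machines in $O(n^\beta)$ rounds. Two pieces of bookkeeping are needed here. First, the input of a \kkk{} instance has $k^2=n^{\gamma\cdot 2}\le nd$ entries, so moving from the \kkk{} layout to the stacked \ndn{} layout costs $O(d)$ rounds by \Cref{obs:redistribution}. Second, the final $C$ consists of sums of $t$ diagonal products, and collecting these requires moving $O(k^2/n)\le O(d)$ entries per machine, which is again $O(d)$ rounds. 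Since $d\le n^\beta$ whenever the claimed bound is nontrivial (and otherwise the hypothesis already gives a bound that beats \Cref{thm:ndn_unconditional_lowerbound}), this overhead is absorbed.

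\textbf{Step 2 (shrink the machine count).} Rename $N=k=n^\gamma$, so $n=N^{1/\gamma}$. Simulate the $n$-machine algorithm on $N$ machines, letting each real machine play $N^{1/\gamma-1}$ virtual machines. One virtual round becomes $N^{1/\gamma-1}$ real rounds: each real machine must send and receive at most that many messages, and the bipartite multigraph of these messages can be edge-coloured with that many colours, as in \Cref{obs:redistribution}. The input for the $N$ machines is first spread onto the virtual layout and the output is gathered back afterwards; each of these moves is a redistribution of $O(N)$ entries per machine. The result is that \nnn{} with $N$ machines runs in
\[
O\bigl(N^{(\beta+1-\gamma)/\gamma}+N\bigr)
\]
rounds.

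\textbf{Step 3 (exponent arithmetic).} Substitute $\beta=\frac{7\gamma}{3}-1-\epsilon$, which is exactly $\frac{14\alpha-2}{9}-\epsilon$ rewritten in terms of $\gamma$. This gives
\[
\frac{\beta+1-\gamma}{\gamma}=\frac43-\frac{\epsilon}{\gamma}.
\]
Because $\gamma\le1$, we have $\epsilon/\gamma\ge\epsilon\ge\epsilon/2$. Hence the running time is $O(N^{4/3-\epsilon/2})$, and the additive $O(N)$ term is dominated. The slack from $\epsilon$ to $\epsilon/2$ is there to absorb these lower-order terms and any rounding in the exponents.

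\textbf{Main obstacle.} The theorem must hold for every target size $N$, and the reduction only works if a suitable $n$ exists. Given $N$, I would pick $n$ with $n^{\gamma}=N$ for a fixed $\alpha$, for instance $\alpha=1$, and make $n$ and $d$ integral with $t$ a perfect cube by padding. I expect the delicate part to be making Steps 1 and 2 rigorous in exactly this respect: the divisibility and perfect-cube assumptions in \Cref{lemma:rect-to-square}, and checking that each redistribution overhead is genuinely $O(d)$ or $O(N)$ and therefore swallowed by the main term.
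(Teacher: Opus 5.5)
Your reduction is the paper's: \Cref{lemma:rect-to-square} gives \kkk{} on $n$ machines, then you simulate $n=N^{1/\gamma}$ machines on $N$ machines with slowdown $N^{1/\gamma-1}$, and the identity $\frac{\beta+1-\gamma}{\gamma}=\frac43-\frac{\epsilon}{\gamma}$ finishes the argument. That core is correct.

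One claim in your added bookkeeping is false: collecting and summing the diagonal blocks need not cost $O(d)$. The receiving side is fine, since each machine gets $t\cdot k^2/n=d$ partial products. The sending side, however, depends on the output layout of the black-box \ndn{} algorithm, which is only required to be balanced, i.e.\ $O(n)$ outputs per machine. Take the paper's own layout ($M_{i,j}$ outputs a $\sqrt n\times\sqrt n$ block) with $\sqrt n\mid d$. Then the block of $M_{i,i}$ lies inside a single $C'_l$, so $M_{i,i}$ holds $n$ partial products for $n$ distinct entries of $C$ (the $t$ summands of an entry lie in different $C'_l$). Nothing can be pre-added, and this step can cost $\Theta(n)$ rounds. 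Hence your absorption condition $d\le n^\beta$ is too weak; you need $n=O(n^\beta)$. This does hold whenever the hypothesis is satisfiable, by the theorem you already cite. \Cref{thm:ndn_unconditional_lowerbound} rules the hypothesis out for $d<\sqrt n$, since $\frac{14\alpha-2}{9}<\alpha+\frac12$ for $\alpha\le1$. For $d\ge\sqrt n$ it forces $\beta\ge1$. The same fact gives $\epsilon\le\frac13$, which you tacitly need for the additive $O(N)$ term in Step~2 to be dominated by $N^{4/3-\epsilon/2}$.

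Two smaller points. First, your suggested choice $\alpha=1$ makes the reduction vacuous ($t=1$, $\gamma=1$, $N=n$): the hypothesis at $d=n$ already reads $O(n^{4/3-\epsilon})$ for \nnn{}. The theorem has content, and yields the $d$-dependent conditional bound, only because the hypothesis at one fixed ratio $d=n^\alpha$ suffices. So you must work with that given $\alpha$ and take $n\approx N^{1/\gamma}$, padding as you describe. Second, the edge colouring in Step~2 must be computed without communication, so each round's virtual communication graph must be known in advance. That holds for the hard-coded schedules of this paper's algorithms, but not for an arbitrary black box, whose pattern may depend on the input values. The paper's proof is silent on this too. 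A fix is to route each simulated round by sorting messages by destination: a merge--split sorting network followed by a collision-free delivery phase takes $O(N^{1/\gamma-1}\log^2 N)$ real rounds. The logarithmic factor is absorbed because $\epsilon/\gamma>\epsilon/2$.
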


By using similar ideas with small adjustments we can also prove the following bound for multiplication over fields.

\begin{theorem}\label{thm:ndn_conditional_lowerbound_fields}
	Assume that we can solve \ndn{} over fields in
	$O(\frac{d^{2-4/{3\omega}}}{n^{2/3\omega+\epsilon}}) $ rounds, for some $\epsilon>0$, in the low-bandwidth model with $n$ machines.
	Then we can solve dense $n\times n$ matrix multiplication over fields in $O(n^{2-\frac{2}{\omega}-\epsilon/2})$ rounds.
\end{theorem}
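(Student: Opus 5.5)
The argument mirrors the semiring case (\cref{thm:ndn_conditional_lowerbound}); only the exponent bookkeeping changes. The plan is to turn a fast \ndn{} algorithm over fields into a fast square algorithm over fields, then compare with the baseline exponent $2-2/\omega$ from \Cref{thm:dense-mm-known}.

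\textbf{Step 1: from \ndn{} to \kkk{} on $n$ machines.} Write $d=n^\alpha$ and suppose \ndn{} runs in $O(n^\beta)$ rounds with $n$ machines. I would apply \Cref{lemma:rect-to-square} verbatim. Its proof only tiles $k\times k$ matrices into $d\times d$ blocks and reads off diagonal blocks of one \ndn{} product, so it never uses the semiring structure and works over any field. Hence \kkk{} with $k=(nd^2)^{1/3}=n^\gamma$, $\gamma=\frac{1+2\alpha}{3}$, is solvable in $O(n^\beta)$ rounds on $n$ machines.

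\textbf{Step 2: rescale the number of machines.} Set $N=n^\gamma$. Using $N$ machines, I would let each one simulate $N^{1/\gamma-1}$ of the original machines, which blows up the round count by that factor. This solves $\langle N,N,N\rangle$ over fields in $O(N^{(\beta+1-\gamma)/\gamma})$ rounds.

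\textbf{Step 3: compare exponents.} The hypothesis is $\beta = (2-\frac{4}{3\omega})\alpha-\frac{2}{3\omega}-\epsilon$, so
\[
\beta+1-\gamma = \Bigl(2-\tfrac{4}{3\omega}-\tfrac{2}{3}\Bigr)\alpha + \tfrac{2}{3}-\tfrac{2}{3\omega}-\epsilon .
\]
This should equal $(2-\frac{2}{\omega})\gamma-\epsilon$, because $(2-\frac2\omega)\frac{1+2\alpha}{3}=\frac{2}{3}-\frac{2}{3\omega}+(\frac43-\frac{4}{3\omega})\alpha$, and the coefficients match. Dividing by $\gamma$, the square instance takes $O(N^{2-2/\omega-\epsilon/\gamma})$ rounds. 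Since $\gamma\le 1$, we have $\epsilon/\gamma\ge\epsilon\ge\epsilon/2$, which gives the claimed $O(N^{2-2/\omega-\epsilon/2})$ bound, renaming $N$ as $n$.

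\textbf{Main obstacle.} The step I expect to be delicate is the statement's quantifier: $n$ is arbitrary, so a target square size $N$ must be realised as $n^\gamma$ for some $n\ge N$. I would handle this by fixing $\alpha$, choosing $n=N^{1/\gamma}$ with rounding, and padding with zeros. The rounding changes only constant factors, and padding is harmless over fields. A second point to check is that machine simulation and the $\Theta(n)$ redistribution overheads from \Cref{obs:redistribution} stay below the target bound. They do, since $2-2/\omega-\epsilon/2\ge 1$ for small $\epsilon$, and the arbitrary fixed input/output layout can be absorbed by that redistribution.
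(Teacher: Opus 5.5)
Your proposal is correct and is the same argument the paper intends here; the paper only says the semiring proof of \Cref{thm:ndn_conditional_lowerbound} carries over with small adjustments, namely \Cref{lemma:rect-to-square}, simulation on $N=n^\gamma$ machines, and the exponent comparison, and your identity $\beta+1-\gamma=(2-\frac{2}{\omega})\gamma-\epsilon$ checks out. One small correction to your closing remark: building the duplicated \ndn{} input and collecting and summing the diagonal blocks can cost up to $O(n)$ rounds of the $n$ simulated machines, more than the $O(N)$ real rounds you budgeted, so the right reason these overheads are absorbed is that the hypothesis is consistent with \Cref{thm:ndn_unconditional_lowerbound} only when $n^\beta=\Omega(n)$ (otherwise the implication is vacuous), not that $2-2/\omega-\epsilon/2\ge 1$.
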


%% file: dnd.tex
\section{\texorpdfstring{\boldmath\dnd{}}{⟨d,n,d⟩} Multiplication}\label{sec:d_n_d}

 \subsection{Upper Bound for \texorpdfstring{\boldmath\dnd{}}{⟨d,n,d⟩}}\label{subsec:dnd_upper_bound}

\begin{theorem}  \label{thm:dnd_upper_bound}
	In the low-bandwidth model with $n$ machines, $\dnd{}$ multiplication over semirings can be computed in \dndup{} communication rounds.
\end{theorem}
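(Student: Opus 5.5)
We prove \cref{thm:dnd_upper_bound} by splitting the inner dimension into groups, solving one square \ddd{} instance per group of $d$ machines, and then summing the partial products. Let $A$ be $d\times n$ and $B$ be $n\times d$, and assume for simplicity that $d$ divides $n$ (otherwise pad with zeros). Partition the inner index set $[n]$ into $g=n/d$ consecutive blocks of size $d$. Write $A=[A^{(1)}\ \cdots\ A^{(g)}]$ as column blocks and $B$ as the matching row blocks $B^{(1)},\dots,B^{(g)}$, each of size $d\times d$. Then
\[
X \;=\; AB \;=\; \sum_{\ell=1}^{g} A^{(\ell)}B^{(\ell)} .
\]
Partition the $n$ machines into $g$ groups $G_1,\dots,G_g$ of $d$ machines each, and assign the subproblem $A^{(\ell)}B^{(\ell)}$ to group $G_\ell$. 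The input blocks $A^{(\ell)}$ and $B^{(\ell)}$ together contain $2d^2$ entries, which is $2d$ per machine of $G_\ell$. By \cref{obs:redistribution}, we can globally rearrange the input in $O(d)$ rounds so that machine $i$ of $G_\ell$ holds row $i$ of $A^{(\ell)}$ and column $i$ of $B^{(\ell)}$. This is exactly the standard input format for square multiplication with $d$ machines.

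Each group now runs the square algorithm of \cref{thm:dense-mm-known} on its $d$ machines, in $O(d^{4/3})$ rounds. The groups are disjoint and communicate only internally, so all $g$ instances run in parallel without conflicts. The algorithm is deterministic and hard-coded, so no runtime coordination is needed. Afterwards, machine $i$ of $G_\ell$ holds row $i$ of $P^{(\ell)}=A^{(\ell)}B^{(\ell)}$, which is $d$ semiring values.

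The remaining step is aggregation. For each index $i\in[d]$, the $i$-th machines of all $g$ groups form a set $S_i$ of size $g=n/d$. Each member of $S_i$ holds a $d$-vector, and we must compute their entrywise sum. The sets $S_1,\dots,S_d$ are disjoint, so we can run \cref{lem:aggregation} inside each $S_i$ in parallel with $k=d$ values. This costs $O(d+\log(n/d))=O(d+\log n)$ rounds, and leaves row $i$ of $X$ on one designated machine of $S_i$. A final application of \cref{obs:redistribution} in $O(d)$ rounds moves the output into the required balanced distribution of $O(d)$ entries per machine. The total is $O(d)+O(d^{4/3})+O(d+\log n)=O(d^{4/3}+\log n)$. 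This is where the additive $\log n$ comes from: it dominates when $d$ is tiny.

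The step I expect to need the most care is the aggregation. We must check that the pipelined tree aggregation of \cref{lem:aggregation} works on an arbitrary subset of machines of size $n/d$ while the other $d-1$ disjoint instances run simultaneously. This holds because each machine belongs to exactly one $S_i$, so per-round in/out-degree $1$ is preserved. A smaller point is the simulation of \cref{thm:dense-mm-known} on a subset of $d$ machines when messages have $O(\log n)$ rather than $O(\log d)$ bits; this is harmless, since larger messages only help. For fields, replacing the square subroutine by the $O(d^{2-2/\omega})$ one gives \cref{thm:dnd_upper_bound_fields} in the same way.
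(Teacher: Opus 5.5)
Your proposal is correct and matches the paper's proof essentially step by step. You split the inner dimension into $n/d$ blocks, solve the resulting independent \ddd{} instances in parallel on disjoint groups of $d$ machines via \cref{thm:dense-mm-known}, and sum the partial products with pipelined aggregation (\cref{lem:aggregation}) in $O(d+\log n)$ rounds; your points about the disjoint aggregation sets $S_i$ and the final output redistribution fill in details the paper leaves implicit.
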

\begin{proof}
	Let $AB=X$ be the multiplication we are computing, where $A$ is a $d \times n$ matrix and $B$ is an $n \times d$ matrix.
	
	We split matrix $A$ into $m:= \frac{n}{d}$ square submatrices $A_1, \ldots, A_{m}$ of size $d \times d$ such that \[ A = [\, A_1 \mid A_2 \mid \cdots \mid A_m \,]. \] Similarly, we split $B$, \[
	B=
	\begin{bmatrix}
		B_1 \\
		B_2 \\
		\vdots \\
		B_m
	\end{bmatrix}.
	\] Note that if $n$ is not divisible by $d$, we can safely pad the last block with zeros.
	Now $X= AB$ can be computed as $X= \sum_{i=1}^{m} A_i B_i$, so it is enough to compute the smaller dense but square matrix multiplication instances and add them all together. 
	
	\subparagraph{Step 1: Multiplications.}
	We assume that for each $i \in [m]$ there are $d$ machines such that each machine holds a row of $A_i$ and a column of $B_i$, otherwise a redistribution of the input can easily be done in $O(d)$ rounds by \Cref{obs:redistribution}. This essentially gives us $m$ independent instances of \anymatmul{d}{d}{d}, i.e. $C^{(i)}=A_i B_i$, using $d$ machines. These can be solved in parallel in $O(d^{4/3})$ rounds by \Cref{thm:dense-mm-known}.
	\subparagraph{Step 2: Aggregation.}
	To compute $X$, we just need to aggregate all the values to compute the sum $X= \sum_{i=1}^{m} C^{(i)}$.
		Formally, we may assume that for each result matrix $C^{(i)}$, where $i \in [m]$, each row $C^{(i)}_{j,*}$ is held in a single machine $M^{(i)}_j$. Then using \Cref{lem:aggregation} we can aggregate the sum in $O(d + \log n)$ rounds.
	
	Combining the above two steps, we solve \dnd{} multiplication in \dndup{} communication rounds.
\end{proof}

	Similarly, we can show the following result.
	\begin{theorem}  \label{thm:dnd_upper_bound_fields}
		In the low-bandwidth model with $n$ machines, $\dnd{}$ multiplication over fields can be computed in \dndupfields{} communication rounds.
	\end{theorem}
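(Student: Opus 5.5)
We mirror the proof of \Cref{thm:dnd_upper_bound} and replace only the local square-multiplication subroutine. Write $X = AB = \sum_{i=1}^{m} A_i B_i$ with $m = n/d$. Here $A = [\,A_1 \mid \cdots \mid A_m\,]$ is split into $d\times d$ column blocks and $B$ into the matching $d \times d$ row blocks $B_1,\dots,B_m$. If $d \nmid n$, we pad with zeros; this is harmless over fields too. We then assign a group of $d$ machines to each index $i$, so the $m$ groups are disjoint and together use $n$ machines. Using \Cref{obs:redistribution}, in $O(d)$ rounds we redistribute so that each machine of group $i$ holds one row of $A_i$ and one column of $B_i$, or whatever balanced layout the square algorithm expects.

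\textbf{Step 1.} Inside each group we now have an independent $\langle d,d,d\rangle$ instance over a field on exactly $d$ machines. This is precisely the setting of \Cref{thm:dense-mm-known} with $d$ in place of $n$, so it runs in $O(d^{2-2/\omega})$ rounds. The groups are disjoint and each machine sends and receives at most one message per round within its own group, so all $m$ simulations run in parallel with no interference.

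One point needs checking: message size. The square algorithm on $d$ machines is designed for $O(\log d)$-bit messages, and we have $O(\log n)$ bits available, which is at least as much. Field elements are assumed to fit in one message, so the simulation goes through. This step also fixes the output layout. We may assume that row $j$ of $C^{(i)} = A_iB_i$ ends up on a designated machine $M^{(i)}_j$; if not, a further $O(d)$-round redistribution inside the group arranges this.

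\textbf{Step 2.} For each output row $j \in [d]$, the $m$ machines $M^{(1)}_j,\dots,M^{(m)}_j$ each hold $d$ field elements and must compute their coordinatewise sum. Field addition is associative and commutative, so \Cref{lem:aggregation} applies with $k=d$ and takes $O(d+\log n)$ rounds. The $d$ aggregation instances involve disjoint sets of machines and therefore run in parallel. The resulting rows land on the machines designated for the output, and a final $O(d)$-round application of \Cref{obs:redistribution} adapts this to any balanced output distribution.

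\textbf{Total.} The cost is $O(d) + O(d^{2-2/\omega}) + O(d + \log n)$. Since $\omega \le 3$ gives $2 - 2/\omega \ge 1/3$, this is not immediately $O(d^{2-2/\omega} + \log n)$; we also need $d = O(d^{2-2/\omega})$, i.e.\ $2 - 2/\omega \ge 1$, which holds exactly because $\omega \ge 2$. Hence the total is \dndupfields{}.

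The main obstacle is this final accounting, i.e.\ confirming that the $O(d)$ redistribution and aggregation costs are absorbed by $d^{2-2/\omega}$ via $\omega \ge 2$. A secondary concern is confirming that \Cref{thm:dense-mm-known} can be invoked as a black box on a subset of $d$ machines inside the larger $n$-machine network.
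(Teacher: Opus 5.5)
Your proposal is correct and follows the paper's route exactly: the paper obtains this theorem by rerunning the proof of \Cref{thm:dnd_upper_bound} with the $O(d^{2-2/\omega})$-round field algorithm of \Cref{thm:dense-mm-known} in Step~1. Your check that $\omega \ge 2$ gives $2-2/\omega \ge 1$, so the $O(d)$ redistribution and aggregation costs are absorbed, is the accounting needed (the aside that $\omega \le 3$ gives $2-2/\omega \ge 1/3$ is garbled but unused). One detail both you and the paper gloss over: when $d$ does not divide $n$, padding creates $\lceil n/d \rceil d \le n+d-1 < 2n$ virtual machines rather than exactly $n$, so the claim that the groups "together use $n$ machines" needs a fix. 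Letting each real machine simulate at most two virtual ones handles this at a constant-factor cost.
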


\subsection{Conditional Lower Bound for \texorpdfstring{\boldmath\dnd{}}{⟨d,n,d⟩}} \label{subsec:dnd-lowerbound-conditional}

\begin{theorem} \label{thm:dnd_conditional_lowerbound}
	Assume that we can solve \dnd{} over semirings in $O(\frac{d^2}{n^{2/3+\epsilon}})$ rounds, $\epsilon>0$, in the low-bandwidth model with $n$ machines. Then we can also solve dense matrix multiplication in $O(n^{4/3-\epsilon/2})$ rounds.
\end{theorem}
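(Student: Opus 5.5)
We will mirror the two-step reduction used for \cref{thm:ndn_conditional_lowerbound}: first turn a fast \dnd{} algorithm on $n$ machines into a fast \kkk{} algorithm on the same $n$ machines, and then rescale the number of machines so that the square instance matches it.

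\textbf{Step 1 (rectangular to square).} Set $k=(d^2 n)^{1/3}$ and $t=k/d=(n/d)^{1/3}$. Split two $k\times k$ matrices $A,B$ into $t\times t$ grids of $d\times d$ blocks. For each output block $(p,q)$ we have $C_{p,q}=\sum_{r=1}^t A_{p,r}B_{r,q}$, so the product needs $t^3=n/d$ block products, grouped into $t^2$ sums of length $t$. The \dnd{} structure $X=\sum_i A_iB_i$ computes exactly such a sum. We therefore plan the following. Form, for each output block, a \dnd{}-type instance of inner dimension $td$. Since we only have one \dnd{} instance of inner dimension $n=t^3d$, we pack all $t^2$ of them in parallel: run $t^2$ independent copies of $\langle d, td, d\rangle$, each on $n/t^2$ machines. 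Each such copy has exactly the shape \dnd{} with $n'=td=n/t^2$ machines.

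There is a cleaner alternative that avoids splitting machines. Take a single \dnd{} instance whose inner dimension enumerates triples $(p,r,q)$. Its outer dimensions would then have to index $(p,q)$ pairs, which inflates the output to $k\times k$, beyond $d\times d$, so this does not fit. Hence we commit to the parallel-copies version. We apply the hypothesised algorithm to $\langle d,n',d\rangle$ with $n'=n/t^2$ machines, which costs $O(d^2/n'^{2/3+\epsilon})$. Running the $t^2$ copies side by side on disjoint machine sets needs only $O(d)$ rounds of redistribution (\Cref{obs:redistribution}), since each machine holds $O(d)$ entries of the $k\times k$ inputs: indeed $k^2/n=d^{4/3}/n^{1/3}\le d$.

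\textbf{Step 2 (machine rescaling).} Parametrise $d=n^\alpha$, so that $k=n^{\gamma}$ with $\gamma=(1+2\alpha)/3$, and write $N=k$. Then $\langle N,N,N\rangle$ is solvable on $N^{1/\gamma}\ge N$ machines in $T$ rounds. Simulating each group of $N^{1/\gamma-1}$ machines by one machine gives an $O(T\cdot N^{1/\gamma-1})$-round algorithm with $N$ machines, exactly as in the argument after \Cref{lemma:rect-to-square}. We then substitute the hypothesised bound for $T$, express everything as a power of $N$, and compare the result with $N^{4/3}$. If the exponent drops by a constant multiple of $\epsilon$ (the statement claims $\epsilon/2$), we obtain the theorem. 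In the calculation we also choose $\alpha$ freely, for instance $\alpha$ making $\gamma$ convenient, because the hypothesis is assumed for all $d$.

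\textbf{Main obstacle.} The delicate point is the exponent bookkeeping in Step 1. The parallel copies use fewer machines, $n'$ instead of $n$, and the hypothesis must be applied at size $n'$. Whether the resulting exponent, after rescaling in Step 2, lands at $4/3-\Theta(\epsilon)$ with the loss being exactly $\epsilon/2$ depends on choosing $d$ relative to $n$ suitably, for example $d$ close to $n$ so that $t$ is small, or $d$ set so that $n'$ is polynomially related to $N$ with factor at most $2$. I would first try the simplest choice: take $d=n^\alpha$ with $\alpha$ near $1$ and verify the $\epsilon/2$ loss. If the additive terms ($O(d)$ redistribution, $\log n$) dominate, I would fall back to direct embedding: $k=d$ with $n=d^{\,c}$, using zero padding so that the inner dimension $n$ is absorbed. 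In that case \dnd{} with $n$ machines directly yields $\langle d,d,d\rangle$, and only Step 2 rescaling is needed.
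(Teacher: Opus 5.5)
Your proposal stops exactly where the proof has to happen. The exponent bookkeeping is left as the ``main obstacle'', and the ways you suggest to resolve it do not prove the intended statement.

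\textbf{Your proposed resolutions do not work.} Choosing $\alpha$ near $1$, or invoking the hypothesis ``for all $d$'', trivializes the theorem: at $d=n$ the problem \dnd{} simply \emph{is} \nnn{}. The content of the result (the $\Omega(d^2/n^{2/3})$ conditional bound) is that a fast algorithm for one fixed aspect ratio $d=n^\alpha$ already yields fast square multiplication, and the paper's proof works for each fixed $\alpha$. Your fallback does not help either. Padding $\langle d,d,d\rangle$ into $\langle d,n,d\rangle$ with $n=d^c$, and then simulating $n$ machines on $N=d$ machines, costs $N^{1+c(1/3-\epsilon)}$ rounds. This is at most $N^{4/3-\epsilon/2}$ only when $(c-1)/3\le (c-1/2)\epsilon$, i.e.\ only for $d$ essentially equal to $n$.

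\textbf{The missing observation.} In your Step~1, $n'=n/t^2=td=(d^2n)^{1/3}=k=N$. So every copy is a $\langle d,N,d\rangle$ instance on exactly $N$ machines. Your Step~2 simulation of $n=t^2N$ machines by $N$ machines is then just running the $t^2=(N/d)^2$ copies one after another. That is precisely the paper's proof, which needs neither the intermediate \kkk{} nor any rescaling, because the inner dimension of \dnd{} already equals the number of machines. Concretely:
\begin{itemize}
\item split the $N\times N$ matrices into $d\times d$ blocks;
\item view each $C_{ij}=\sum_k A_{ik}B_{kj}$ as one $\langle d,N,d\rangle$ instance;
\item solve the $(N/d)^2$ instances sequentially on the same $N$ machines.
\end{itemize}
This takes $O\bigl((N/d)^2\cdot d^2/N^{2/3+\epsilon}\bigr)=O(N^{4/3-\epsilon})$ rounds, which is even stronger than the claimed $\epsilon/2$ loss.

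\textbf{A further problem with the modular route.} In Step~1 you insert an $O(d)$-round replicating redistribution into a balanced \kkk{} layout on the $n$ machines. After the simulation this costs $O(t^2d)=O(N^2/d)$ rounds, which exceeds $N^{4/3}$ whenever $d<N^{2/3}$. So the route as written fails on part of the relevant range $N^{1/3}\le d\le N$. The fix is to skip that intermediate layout. Instead, lay out the square input so that each machine already holds its share of every block instance. This is possible because those shares partition the input into $O(N)$-sized pieces, so the layout is reachable by an $O(N)$-round redistribution.
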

\begin{proof}
Let $d = n^\alpha$, where $0 < \alpha \leq 1$. Suppose we have an algorithm that solves $\dnd{} \equiv \anymatmul{n^\alpha}{n}{n^\alpha}$ type of matrix multiplication instance in $O(n^\beta)$ rounds. We can use this algorithm as a black box to solve \anymatmul{n}{n}{n} as follows.  
Given matrices $A,B \in \mathcal{S}^{n \times n}$, to compute $A \times B = C$, we divide the matrices $A,B$ and $C$ into $\na \times \na$ square submatrices $A_{ij}, B_{ij}$, and $C_{ij}$ where $i,j \in [n^{1-\alpha}]$ such that $C_{ij} = \sum_{k=1}^{n^{1-\alpha}} A_{ik} \times B_{kj}$. 
Note that $C_{ij}$ can be thought of as an $\anymatmul{\na}{n}{\na}$ instance (where matrices $A_{ik}$ are arranged horizontally to create an $n^\alpha \times n$ matrix and matrices $B_{kj}$ are arranged vertically to form an $n \times n^\alpha$ matrix). Therefore, we can use the above $n^\beta$ round algorithm to compute this product. Applying the same algorithm to all $n^{2-2\alpha}$ instances of $C_{ij}$, one by one, gives us an $O(n^{2-2\alpha +\beta})$ round algorithm to compute $C$. 

Now, if there are no algorithms that solve square matrix multiplication in $O(n^{4/3-\epsilon'})$ rounds, for some $\epsilon'>0$, then ${2-2\alpha +\beta} \geq 4/3 - \epsilon'$ and hence
$\beta \geq 2 \alpha - \frac{2}{3}-\epsilon'$. Recall that we set $d=\na$. By substitution, it follows that no $O(d^2 n^{-\frac{2}{3} - \epsilon})$-round algorithm for \dnd{} exists, where $\epsilon = 2\epsilon'$.
\end{proof}

By using similar ideas with small adjustments we can also prove the following bound for multiplication over fields.

\begin{theorem} \label{thm:dnd_conditional_lowerbound_fields}
	Assume that we can solve \dnd{} over fields in
	$O(\frac{d^2}{n^{2/\omega+\epsilon}})$ rounds, for some $\epsilon>0$, in the low-bandwidth model with $n$ machines. Then we can also solve dense
	$n\times n$ matrix multiplication over fields in
	$O(n^{2-2/\omega-\epsilon/2})$ rounds.
\end{theorem}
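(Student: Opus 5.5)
The plan is to mirror the proof of \Cref{thm:dnd_conditional_lowerbound}, replacing the semiring baseline exponent $4/3$ by the field baseline $2-2/\omega$ from \Cref{thm:dense-mm-known}. Write $d=n^\alpha$ with $0<\alpha\le 1$, and suppose \dnd{} over fields can be solved in $O(n^\beta)$ rounds with $n$ machines. Take an $n\times n$ instance $C=AB$ over a field. Partition $A$, $B$ and $C$ into $n^\alpha\times n^\alpha$ blocks indexed by $i,j\in[n^{1-\alpha}]$, so that
\[
C_{ij}=\sum_{k} A_{ik}B_{kj}.
\]
For a fixed pair $(i,j)$, place the blocks $A_{ik}$ side by side to form an $n^\alpha\times n$ matrix, and stack the blocks $B_{kj}$ to form an $n\times n^\alpha$ matrix. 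Their product is exactly $C_{ij}$, so each $C_{ij}$ is an instance of $\anymatmul{n^\alpha}{n}{n^\alpha}$.

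The hypothesised algorithm solves each such instance with the same $n$ machines. Two bookkeeping points need checking:
\begin{itemize}
\item Each machine holds $O(n)$ input entries of the square instance, while the \dnd{} instance expects $O(d)$ entries per machine. Feeding the $(i,j)$ instance may therefore need redistribution, but by \Cref{obs:redistribution} this costs only $O(d)$ rounds per instance, or $O(n)$ overall if done once up front.
\item Outputs of size $d^2/n$ per machine must be moved to the square-output owners, which is again cheap redistribution.
\end{itemize}
The algebra itself needs no subtraction, so it works verbatim over fields.

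Running the $n^{2-2\alpha}$ instances sequentially costs $O(n^{2-2\alpha}(n^\beta+d))$ rounds. The additive $d$ term contributes $n^{2-\alpha}$, which is below the target whenever the target exceeds it. In the interesting regime that term is dominated, or one can batch the redistribution instead. Now plug in the hypothesis $n^\beta=d^2/n^{2/\omega+\epsilon}=n^{2\alpha-2/\omega-\epsilon}$. The total becomes
\[
n^{2-2\alpha}\cdot n^{2\alpha-2/\omega-\epsilon}=n^{2-2/\omega-\epsilon},
\]
which is at most $O(n^{2-2/\omega-\epsilon/2})$, as the statement requires.

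I expect the main obstacle to be the redistribution overhead, not the algebra. The per-instance $O(d)$ shuffling could dominate when $n^\beta<d$, that is, when the hypothesised bound is already sublinear in $d$. If that happens, I would try to avoid per-instance redistribution by arranging once, at the start, an input layout in which every \dnd{} subinstance is already balanced (each machine holds its $d$ entries for every $(i,j)$ simultaneously). That layout costs $O(n)$ rounds total, and $O(n)$ is $o(n^{2-2/\omega})$ since $\omega>2$. Otherwise I would note that the bound is vacuous in that regime because of the trivial lower bounds. The slack $\epsilon/2$ versus $\epsilon$ absorbs constant and logarithmic factors.
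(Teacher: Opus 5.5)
Your proposal is correct and is essentially the paper's argument. The paper proves this exactly as it proves \Cref{thm:dnd_conditional_lowerbound}: it splits the square instance into $n^{2-2\alpha}$ instances of $\anymatmul{n^\alpha}{n}{n^\alpha}$, solves them one after another in $O(n^{2-2\alpha+\beta})=O(n^{2-2/\omega-\epsilon})$ rounds, and never discusses layout costs.

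Of your two remedies for layout, only the one-time $O(n)$-round arrangement is sound. It works because the black box's placement of the $A$-part does not depend on $j$, and its placement of the $B$-part does not depend on $i$, so $n^{1-\alpha}\cdot O(d)=O(n)$ entries per machine suffice. Per-instance $O(d)$ redistribution, by contrast, does not follow from \Cref{obs:redistribution}: from a row-wise square layout, one subinstance's data sits on only $d$ machines holding $n$ entries each.
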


\subsection{Unconditional Lower Bound for \texorpdfstring{\boldmath\dnd{}}{⟨d,n,d⟩}}
\label{subsec:dnd-lowerbound-unconditional}

Here we aim to prove the following.

\begin{theorem} \label{thm:dnd_unconditional_lowerbound}
	In the low-bandwidth model with $n$ machines, any algorithm that computes the product \dnd{} over fields or semirings requires \dndlowuncond{} communication rounds.
\end{theorem}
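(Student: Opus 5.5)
We follow the information-theoretic template used for \cref{thm:ndn_unconditional_lowerbound}. Fix an arbitrary balanced input and output distribution. The output $X$ has $d^2$ entries and there are $n$ machines, so the total output is spread over many machines. The idea is to choose a set $S$ of machines such that the machines in $S$ jointly hold few input entries but jointly must output many entries. We then make those output entries equal to distinct free input entries held outside $S$. Treating $S$ as Bob and $M\setminus S$ as Alice, every round lets at most $|S|$ messages enter $S$. Hence if Bob must learn $K$ entries of $\Theta(\log n)$ bits each, \Cref{lemma:vec_comm} (applied with the per-round bandwidth scaled by $|S|$) gives $\Omega(K/|S|)$ rounds. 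The target is $K = \Omega(d^2)$ with $|S| = O(n)$, which gives $\Omega(d^2/n)$. One can also simply take $S$ to be the set of machines producing output and note that a constant fraction of the $d^2$ output values must come from outside.

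The concrete plan is as follows. Let $S$ be a set of machines that collectively output at least $d^2/2$ entries of $X$ while holding at most half of the $2dn$ input entries. If each machine outputs $O(d)$ entries, about $\Theta(d)$ machines suffice to cover half the output. These hold only $O(d^2)\le O(dn)$ inputs, so most of $A$ and $B$ lies outside $S$. We embed information through a $0/1$ gadget, analogous to the construction of $B$ in the previous section. Let $I$ be the set of column indices $k\in[n]$ of $A$ (equivalently rows of $B$) for which neither column $k$ of $A$ nor row $k$ of $B$ has any entry held in $S$. Since $S$ holds $O(d^2)$ entries and $d\le n$, at least $n-O(d^2)$ indices remain, but this is useless when $d\ge\sqrt n$. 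I would therefore instead take $S$ to be exactly the output machines for a chosen set of output cells, and choose the cells adaptively.

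More robustly: set $A_{i,k}=1$ exactly when $k=\pi(i)$ for an injective assignment $\pi:[d]\to[n]$ of rows of $A$ to indices, and let $B$ be arbitrary. Then $X_{i,j}=B_{\pi(i),j}$. So $X$ reveals $d$ entire rows of $B$, which amounts to $d^2$ field elements. Each output machine $m$ outputs the cells $(i,j)$ it owns, whose values are entries $B_{\pi(i),j}$. These are held by whichever machines hold row $\pi(i)$ of $B$. We choose $\pi$ by averaging (a random injection) so that, in expectation, only an $O(d/n)\cdot O(\text{const})$ fraction of each machine's required entries is already local. Consequently $\Omega(d^2)$ of the output values must be received by their output machines from elsewhere.

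Finally, we account for the bandwidth. All machines together receive at most $n$ messages per round. Taking Bob to be the whole collection of output machines, with each cell's source outside Bob's own copy, only gives a per-machine argument. So I would argue per machine: some machine $m$ must learn $\Omega(d^2/n')$ remote values, where $n'\le n$ is the number of machines with nonempty output, and $d^2/n'\ge d^2/n$. Applying \Cref{lemma:vec_comm} to that single machine as Bob and everyone else as Alice yields $\Omega(d^2/n)$ rounds. The main obstacle is making the randomized choice of $\pi$ guarantee simultaneously that the machine outputting many cells also needs values not already held locally. A random injective $\pi$ gives that each needed entry $B_{\pi(i),j}$ is local with probability $O(d/n)$, since the machine holds $O(d)$ entries spread over at most $O(d)$ rows out of $n$. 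This suffices whenever $d\le n/2$, and the case $d=\Theta(n)$ follows from \Cref{cor:nnn_unconditional_lowerbound}-style reasoning.
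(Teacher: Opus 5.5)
Your argument is sound for $d$ up to a constant fraction of $n$, and it takes a genuinely different route from the paper.

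The paper uses a black-box reduction. It tiles an \nnn{} instance into the $(n/d)^2$ blocks $C_{ij}=\sum_k A_{ik}B_{kj}$, each an instance of \dnd{}, and runs these one after another. It then invokes \Cref{cor:nnn_unconditional_lowerbound}, so $(n/d)^2\cdot T=\Omega(n)$, i.e.\ $T=\Omega(d^2/n)$.

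You argue directly:
\begin{itemize}
\item By pigeonhole, some machine $m$ outputs a set $O_m$ of $K\ge d^2/n$ cells.
\item Making $A$ a random partial permutation matrix turns these cells into $K$ distinct entries $B_{\pi(i),j}$.
\item Each such entry is held by $m$ with probability at most $|H_m|/n$, where $H_m$ is the set of $B$-entries held by $m$.
\item If $|H_m|\le n/2$, some $\pi$ leaves at least $K/2$ of them remote, and \Cref{lemma:vec_comm} with $m$ as Bob gives $\Omega(K)$ rounds.
\end{itemize}
Your route is self-contained. It also avoids having to check that the \nnn{} bound applies to the input/output layout induced by running the \dnd{} algorithm $(n/d)^2$ times, in which a machine may output up to $\Theta(n^2/d)$ entries. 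In its range it even yields the stronger, layout-sensitive bound $\Omega(\max_m|O_m|)$. The paper's route is a one-liner that is uniform in $d$ and mirrors \Cref{thm:dnd_conditional_lowerbound}.

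Two repairs are needed.

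First, balance only gives $|H_m|\le cd$ for some constant $c$, and $c\ge 2$ is forced because there are $2dn$ input entries. So your averaging covers $d\le n/(2c)$, not $d\le n/2$.

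Second, and more importantly, the leftover range $d=\Theta(n)$ is not a formality for your gadget. Take $d=n/2$: machines $1,\dots,d$ may each hold a whole column $j$ of $B$ and output column $j$ of $X$. Then every $B_{\pi(i),j}$ is local for every $\pi$, and the gadget gives nothing. So ``Corollary-style reasoning'' has to be made concrete. Two options:
\begin{itemize}
\item Use the paper's reduction, which for $d=\Theta(n)$ needs only $O(1)$ instances.
\item Switch gadgets when most of $O_m$ lies in the $O(1)$ columns $j$ of $B$ of which $m$ holds more than $n/2$ entries. There, use the transposed gadget $B_{k,j}=[k=\sigma(j)]$, which gives $X_{i,j}=A_{i,\sigma(j)}$.
\end{itemize}

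Finally, drop the abandoned attempts with the set $S$ and the index set $I$; the actual proof is your third and fourth paragraphs.
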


\begin{proof}
We can use the idea of the conditional lower bound, but use the unconditionally true bound from \Cref{cor:nnn_unconditional_lowerbound} that square matrix multiplication cannot be done in $o(n)$ rounds.
\end{proof}